\documentclass[a4paper,11pt]{article}
\pdfoutput=1 

\usepackage{jheppub} 

\usepackage[utf8]{inputenc} 

\usepackage{amsmath}
\usepackage{amssymb}
\usepackage{amsthm}
\usepackage{mathrsfs}
\usepackage{comment}
\usepackage{graphicx}

\usepackage[toc,page]{appendix}

\usepackage{physics}
\usepackage{braket}

\newcommand{\C}{\mathbb{C}}
\newcommand{\R}{\mathbb{R}}

\newtheorem{definition}{Definition}
\newtheorem{proposition}{Proposition}

\newtheorem{remark}{Remark}

\usepackage{xcolor}

\title{Quantum Hall States response to toroidal geometry deformation}

\author[b,d]{Bruno Mera}
\author[a,b]{José M. Mourão}
\author[a,b]{João P. Nunes}
\author[c]{Carolina Paiva}
\affiliation[a]{Center for Mathematical Analysis, Geometry and Dynamical Systems, Instituto Superior Técnico,
Universidade de Lisboa, 1049-001 Lisboa, Portugal}
\affiliation[b]{Department of Mathematics, Instituto Superior Técnico, Universidade de Lisboa, 1049-001 Lisboa,
Portugal}
\affiliation[c]{Raymond and Beverly Sackler School of Physics and Astronomy, Tel Aviv University, Tel Aviv 6997801, Israel}
\affiliation[d]{Instituto de Telecomunicações, 1049-001 Lisboa, Portugal}

\emailAdd{bruno.mera@tecnico.ulisboa.pt}
\emailAdd{paiva@mail.tau.ac.il}
\emailAdd{jpnunes@math.tecnico.ulisboa.pt}
\emailAdd{jmourao@tecnico.ulisboa.pt}

\makeatletter
\gdef\@fpheader{}
\makeatother

\begin{document} 

\abstract{In this paper, we apply techniques of geometric quantization to study the response of the integer and fractional quantum Hall effects to toroidal geometry deformation. 
The main method is that of using complex time Hamiltonian evolution
to induce the geometry change
and then the associated generalized coherent state transforms (gCST) to find 
the evolution of the Laughlin states.

We consider two kinds of deformations. The first 
are flat toroidal deformations.
Although Laughlin states for all flat toroidal geometries have been thoroughly
 studied
before, we believe that our approach via the 
gCST is novel.  It also serves as a  testing ground to study
the non-flat K\"ahler deformations.
The Hamiltonians used in the flat deformations are quadratic in the generators of translations and therefore non periodic.

The second kind of deformations involve nonflat K\"ahler toroidal deformations, generated by global, thus bi-periodic,  Hamiltonians 
on the torus. 
The corresponding imaginary time flows are (elliptic curve modulus) $\tau$--preserving Mabuchi geodesics
in the space of K\"ahler metrics on the torus, hitting 
a curvature singularity in finite imaginary time.
By restricting to $S^1$--invariant deformations we find explicit analytic expressions for the
evolution of the toroidal geometry and of the Laughlin states
 all the way to the singularity.

}

\maketitle

{}


\section{Introduction}
\label{sect:intro}  

The quantum Hall effect (QHE) is a cornerstone of modern condensed matter physics, characterized by quantized Hall conductance that is remarkably robust against disorder and interactions. The integer QHE (IQHE) established the connection between transport properties and topological invariants in band theory—the Chern numbers—while the fractional QHE (FQHE) revealed genuinely interacting topological phases exhibiting fractionally charged anyonic excitations and topological ground-state degeneracy. In Laughlin’s construction, the holomorphic structure plays a central role where model ground state wavefunctions (e.g., the Laughlin state at filling $\nu=1/q$) appear as holomorphic sections with Gaussian weight.

In this geometric setting, the magnetic field on the two-dimensional sample surface acts as an effective symplectic form. Together with the Riemannian metric, it determines a compatible complex structure, thus endowing the surface with a Kähler geometry. This observation allows the use of tools from geometric quantization—particularly holomorphic (or Kähler) quantization—in describing quantum states of the QHE~\cite{Woodhouse}. These techniques have proved particularly effective in the analysis of Laughlin states, enabling the study of their evolution under deformations of the underlying Riemannian metric. Of particular interest are deformations corresponding to geodesic rays in the space of Kähler metrics, generated by the analytic continuation to imaginary time of Hamiltonian flows on the surface \cite{donaldson:1999,MouraoPimentelComplextime}. The evolution of quantum states along these families of Kähler quantizations can be described analytically in terms of generalized coherent state transforms (gCST) (See, for example, \cite{MouraoPimentelGCST} for the application in the study of quantizations of $T^*K$ where $K$ is a compact Lie group). These methods were recently applied in \cite{matos:mera:mourao:mourao:nunes:2023} to study Laughlin states of the FQHE on the two-sphere. Starting from the known expression for the Laughlin state on the round sphere \cite{Haldane2}, the authors obtained explicit deformations along axially symmetric metrics. Remarkably, the particle density exhibited nontrivial behavior as the sphere was deformed into a long, thin cigar while preserving total area.

In the present paper, we extend these geometric quantization techniques to flat and non-flat geometries on the two-torus. In Section \ref{sec:flat_geometry}, we show that the gCST reproduces the known Laughlin states~\cite{Haldane-Rezayi, Murayama, Johri.Papic.Schmitteckert.Bhatt.Haldane2016} for various flat geometries on the torus, while in the Appendix \ref{section:flatgeometriesplane}, we confirm the same for the plane. These results provide further evidence of the validity of the geometric quantization approach and its consistency with known constructions.

In Section \ref{sec:non_flat_geometry}, we apply the gCST to the study of Laughlin states for non-flat geometries on the two-torus, obtained through deformations of the Kähler structure generated by imaginary-time Hamiltonian flows. In these cases—much like for the deformations of the sphere in \cite{matos:mera:mourao:mourao:nunes:2023} and of the plane discussed above—the Hamiltonian is global function on the compact torus, and the complex structure varies within a fixed equivalence class under biholomorphisms, while the symplectic form remains fixed. The result is an deformation of the Riemannian metric, but not of the complex manifold itself. By contrast, in the flat torus case of Section \ref{sec:flat_geometry}, the use of a Hamiltonian function that is smooth on the universal cover but non-periodic—hence not globally defined on the torus—permits an effective variation of the modular parameter. This corresponds to a genuine change of complex structure class, rather than a  deformation within one. From the standpoint of Kähler geometry and geometric quantization, this mechanism is particularly interesting and suggests possible extensions to more general symplectic manifolds.

\section{Preliminaries}
\label{sect:Preliminary_Concepts}

In this section, we will first review the use of Hamiltonian flows in imaginary time in the deformation of Kähler structures and their application to geometric quantization. Then we review some basics of theta functions. These results  give the necessary background for the description of one-particle and many-particle Laughlin states for 
the quantum Hall effect on the torus.

\subsection{Complex symplectomorphisms and K\"ahler structure deformations}
\label{sub:Hamiltonian_Flows_Imaginary_Time}

Here, we will describe deformations of the geometry of Kähler manifolds which, in the so-called symplectic picture, correspond to changing the complex structure while keeping the symplectic form fixed. In order to do that, we need first to introduce a few results from the theory of flows of complex Hamiltonians. 

Let $(M, \omega, J, \gamma)$ be a compact Kähler manifold, where $\omega$ is the symplectic structure, $J$ is the complex structure and $\gamma$ is the Riemannian metric and where we assume that the three structures are 
real-analytic. Let $H$ be a real-valued real-analytic Hamiltonian function on $M$ and $X_{H}$ the associated Hamiltonian vector field, i.e. $\iota_{X_H}\omega = dH$. From the theory of ODE's it follows that the flow of the vector field $X_{H}$, $\varphi^{X_H}_{t}$ is also real-analytic in $t$. Let $C^{\omega}(M)$ be the space of real-analytic functions on $M$. From the theory of Lie series (see \cite{MouraoPimentelComplextime} and references therein), if $f\in C^{\omega}(M)$ there exists $T>0$ such that for $\tau\in D_{0}=\{\tau \in \mathbb{C}:|\tau|<T\}$ we have that the Lie series

\begin{equation}
\left(\varphi^{X_H}_{\tau}\right)^{*}f = e^{\tau X_{H}}\cdot f = \sum_{k=0}^{\infty} \frac{X_{H}^k(f)}{k!}\tau^k.
\label{eq:Hamiltonian_flow}
\end{equation}
is absolutely and uniformly convergent on compact subsets in $M\times D_{0}$.

In particular, one can apply Eq.~\eqref{eq:Hamiltonian_flow} to local $J$-holomorphic coordinates in $M$, such that for small enough $\tau$, one obtains a new system of local complex holomorphic coordinates

\begin{equation}
\label{eq:zitau}
z^{i}_{\tau}=e^{\tau X_{H}}\cdot z^{i}.
\end{equation}

These new local systems of holomorphic coordinates glue together to produce a new globally well defined complex structure in $M$, $J_{\tau}$ ~\cite{MouraoPimentelComplextime}. The local coordinate systems $z^{i}_{\tau}$, as defined above, are local $J_{\tau}$-holomorphic coordinates. In fact, this construction defines a unique biholomorphism:

\begin{equation}
   \varphi^{X_H}_{\tau}:(M,J)\rightarrow (M,J_{\tau}).
   \label{eq:biholomorphism_complexstructure}
\end{equation}

The fact that that $J_\tau$ is still compatible with the symplectic structure $\omega$ allows us to  
obtain a new global Kähler structure $(M,\omega,J_{\tau},\gamma_{\tau})$, where the original symplectic form is unchanged ~\cite{MouraoPimentelComplextime}. 
It is then clear that the analytic continuation of Hamiltonian flows to complex time gives  a systematic way of deforming Kähler structures.

\begin{remark}
Notice that in (\ref{eq:Hamiltonian_flow}) and in the definition of $J_\tau$ and of
$\varphi^{X_H}_{\tau}$ in (\ref{eq:zitau}), (\ref{eq:biholomorphism_complexstructure}),
the Hamiltonian $H$ can be complex valued.

\end{remark}

\subsection{Geometric Quantization}
\label{sub:geometric_quantization}

Quantization is the process of constructing a quantum theory from a given classical system, whose phase space is represented by a symplectic manifold $N$ of dimension $2n$. For a general symplectic manifold, for instance in the absence of symmetries, the quantization procedure can be quite intricate, for example due to absence of a natural choice of local Darboux coordinate systems. 

Geometric quantization is a framework which aims at assigning a Hilbert space of quantum states, 
together with appropriate operators acting on it, corresponding to the quantization of a symplectic manifold 
$(N,\omega)$.  Given a symplectic manifold $(N,\omega)$ that satisfies the integrability condition $\left[\tfrac{\omega}{2\pi \hbar}\right]\in H^{2}(M,\mathbb{Z})$, one can construct a line bundle $L$ over $N$ along with a connection $\nabla$ on $L$ which has a curvature $-i\omega/\hbar$. $L$ is called the prequantum line bundle. One is then able to define operators acting on sections of $L$, the prequantum operators of Kostant-Souriau, which safisfy the desired relation between commutators and Poisson brackets.  The prequantum operators act naturally on the space of smooth sections of $L$.

To obtain a Hilbert space of quantum states two other ingredients are necessary: polarizations and half-forms. Polarizations come about because the representation of prequantum operators is not irreducible (see \cite{BrianHall}) making the space of all smooth sections of $L$ is too big. Given a maximal Abelian Poisson subalgebra of $C^{\infty}(N)$, ie a maximal set of Poisson commuting observables, the corresponding prequantum operators commute and it is reasonable to assume that there will be a quantization of these observables on which they will act as multiplication operators. To achieve this, one considers a Hilbert space consisting of the sections of the prequantum line bundle which are covariantly 
constant along the Hamiltonian vector fields of those obsearvables. The generalization of this idea leads to the definition of polarization: an involutive Lagrangian distribution in $TN\otimes \mathbb{C}.$
The Hilbert space of quantum states will then be given by the sections of $L$ that are covariantly constant along the polarization, $P$. These are the polarized sections. A polarization $P$ is said to be complex if $P\cap \overline{P}=\{0\}$. Complex polarizations are closely related to Kähler structures and because of this connection they will be extremely useful to work with. Indeed, choosing a complex polarization on a symplectic manifold is equivalent to defining a compatible complex structure on the manifold (see \cite{BrianHall, Woodhouse}).

\begin{definition}Let $J$ be a compatible complex structure on $(N,\omega)$ so that $(N,\omega, J)$ is a K\"ahler manifold with Riemannian metric $\gamma (\cdot, \cdot) = \omega (\cdot, J\cdot).$ The corresponding K\"ahler polarization $P$ is defined by the complex polarization $P = T^{(1,0)}N$ so that, pointwise, $P$ is given by the  $(+i)$-eigenspace for $J$.
\end{definition}


{}

K\"ahler polarizations are very relevant for the study of the quantum Hall effect.
In order to obtain the correct physical results in some examples (see for example \cite{BrianHall}) and, in general, to get a better behaved quantization procedure, one also needs to include the so-called half-form correction. 
We note that, in the case of the two-dimensional torus and, more generally, arbitrary Riemann surfaces, half-forms are closely related to spin structures.

\begin{definition}
Let $P$ be a Kähler polarization on a symplectic manifold $(N,\omega)$. The canonical line bundle $K_{P}$ of P is the complex line bundle whose sections are the differential n-forms $\alpha\in \Omega^{n}(N;\mathbb{C})$ which satisfy:
\begin{equation}
    \iota_{X}\alpha = 0
    \label{halfformcondition}
\end{equation}
for all $X\in \Gamma(\overline{P})$.
\end{definition}

Considering the complex structure $J_{P}$ determined locally by holomorphic coordinates ($z_{1},...,z_{n}$) we see that the condition \eqref{halfformcondition} implies that the smooth sections of $K_{P}$ are precisely the $(n,0)$-forms on $N$.

\begin{definition}
A square root of $K_{P}$ is a complex line bundle $\delta_{P}$ over $N$ along with an isomorphism $\phi: \delta_{P}\otimes \delta_{P}\rightarrow K_{P}$.
\end{definition}

Thus, if $s_{1},s_{2}$ are sections of $\delta_{P}$ then $s_{1}\otimes s_{2}$ is a section of $K_{P}$. We will be working with  complex polarizations so $\delta_{P}$ is naturally a complex line bundle. The Lie derivative along 
$\overline{P}$ defines a partial connection on $K_{P}$ which descends to $\delta_{P}$; this allows for the definition of polarized sections of $\delta_{P}$. 
Assume now that we have chosen a square root of $K_{P}$, $\delta_{P}$, and that we also have a prequantum line bundle $L$ over $N$. We can then form the tensor product of these two line bundles $L\otimes\delta_{P}$. The sections of this line bundle can be decomposed in the following way: $\tilde{s}\in\Gamma(L\otimes\delta_{P})$ can be viewed, locally, as $\tilde{s}=s\otimes\nu$ with $s\in\Gamma(L)$ and $\nu\in\Gamma(\delta_{P})$.\par
We are now able to define a partial connection on the tensor bundle $L\otimes \delta_{P}$:

\begin{proposition}[(see \cite{BrianHall}]
Let P be a polarization on a symplectic manifold $(N,\omega)$ with prequantization $(L,h(.,.),\nabla)$. Let $\delta_{P}$ be a square root of $K_{P}$. The tensor product partial connection $\nabla: \Gamma(\overline{P})\times \Gamma(L\otimes\delta_{P})\rightarrow \Gamma(L\otimes\delta_{P})$ can be written as:
\begin{equation*}
    \nabla_{X}\tilde{s} = (\nabla_{X}s)\otimes \nu + s\otimes(\nabla_{X}\nu)
\end{equation*}
and it does not depend on the choice of $s,\nu$ and thus it is defined globally.
\end{proposition}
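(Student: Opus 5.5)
The claim to prove is that the product-rule formula gives a well-defined partial connection on $L\otimes\delta_P$, independent of how a section is split as $s\otimes\nu$. I will first fix the two ingredients. On $L$, $\nabla_X$ for $X\in\Gamma(\overline P)$ is the restriction of the prequantum connection. On $\delta_P$, $\nabla_X\nu$ is the partial connection induced from the one on $K_P$, which is $\nabla_X\alpha=\iota_X d\alpha$, equal to $\mathcal L_X\alpha$ for $\alpha\in\Gamma(K_P)$ since $\iota_X\alpha=0$.

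For $\delta_P$, define $\nabla_X\nu$ locally on a nonvanishing $\nu$ by requiring
\[
\phi\big(\nabla_X\nu\otimes\nu+\nu\otimes\nabla_X\nu\big)=\nabla_X\phi(\nu\otimes\nu).
\]
That is, write $\nabla_X(\nu^2)=\theta(X)\,\nu^2$ and set $\nabla_X\nu=\tfrac12\theta(X)\,\nu$, extending by the Leibniz rule $\nabla_X(f\nu)=(Xf)\nu+f\nabla_X\nu$. To check consistency, replace $\nu$ by $g\nu$ with $g$ nonvanishing. Then $\nabla_X(g^2\nu^2)=(2g\,Xg+g^2\theta(X))\nu^2$, which gives $\tfrac12\theta'(X)=Xg/g+\tfrac12\theta(X)$, matching the Leibniz rule. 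So $\nabla$ on $\delta_P$ is a well-defined partial connection along $\overline P$. The paper asserts this descent, and I would take it as given with this short verification.

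The main step is independence of the decomposition. Any other local decomposition of the same $\tilde s$ has the form $\tilde s=(fs)\otimes(f^{-1}\nu)$ for a local nonvanishing smooth function $f$. This is because the fibres are one-dimensional, and on a neighbourhood where $s,\nu$ are nonvanishing, any $s',\nu'$ with $s'\otimes\nu'=s\otimes\nu$ satisfy $s'=fs$ and $\nu'=f^{-1}\nu$. Applying the Leibniz rules for both connections,
\[
\nabla_X(fs)\otimes f^{-1}\nu+fs\otimes\nabla_X(f^{-1}\nu)=\big(Xf\,f^{-1}-f\,f^{-2}Xf\big)\,s\otimes\nu+\nabla_Xs\otimes\nu+s\otimes\nabla_X\nu .
\]
The first term vanishes, so the expression agrees with the one computed from $s\otimes\nu$.

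Since the local definitions agree on overlaps, they glue to a global operator. It is $C^\infty$-linear in $X$ because both factors are, and it satisfies the Leibniz rule in $\tilde s$, so it is a partial connection along $\overline P$.

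The only delicate point is zeros of sections. I would handle this by working with local nonvanishing frames and extending to arbitrary sections through the Leibniz rule. Equivalently, one can define the operator on frames first and then check that the resulting operator agrees with the stated formula for any local factorization.
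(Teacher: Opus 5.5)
Your proposal is correct and is essentially the standard argument that the paper defers to by citing Hall (the paper itself gives no proof): you define the partial connection on $\delta_P$ as half of the one induced on $K_P$, using local nonvanishing frames, and you check that the product-rule formula is unchanged under $s\otimes\nu=(fs)\otimes(f^{-1}\nu)$ because the $Xf$ terms cancel. Your frame-based handling of sections with zeros is fine; equivalently, the map $(s,\nu)\mapsto\nabla_X s\otimes\nu+s\otimes\nabla_X\nu$ is $C^\infty$-balanced, and this gives well-definedness on all of $\Gamma(L\otimes\delta_P)$ at once.
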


With the definition of a partial connection on the tensor bundle we can define polarized sections in this bundle as being the sections $s\in\Gamma(L\otimes\delta_{P})$ such that $\nabla_{X}s=0$, for each $X\in\Gamma(\overline{P})$.\par
The last element needed in order to define the Hilbert space for geometric quantization with half-form correction is to define an Hermitian structure on the bundle $\delta_{P}$, which we will denote by $\langle\cdot,\cdot\rangle$. It turns out that this Hermitian structure is naturally defined---because of the existence of the Liouville volume form---and so we also have a natural Hermitian structure on $L\otimes \delta_{P}$ by combining the two Hermitian structures.
We are now able to define the quantum Hilbert space with the half-form correction:

\begin{definition}
Let $P$ be a complex polarization on a quantizable symplectic manifold $(N,\omega)$ and let $(L,h(\cdot,\cdot),\nabla)$ be a prequantization of this manifold, where $h$ is an Hermitian structure on $L$ compatible with the connnection $\nabla$. Let $\delta_{P}$ be a square root of the canonical bundle $K_{P}$. Then, the half-form Hilbert space for a complex polarization $P$ on $N$ is:
\begin{equation*}
    \mathcal{H}_{P} := \overline{\{\tilde{s}\in\Gamma(L\otimes\delta_{P}):||\tilde{s}||^2<\infty, \nabla_{X}\tilde{s}=0,\forall X\in\Gamma(\overline{P})\}}^{||.||_{\mathrm{hf}}}
\end{equation*}

with 
$\langle \tilde{s}_{1},\tilde{s}_{2}\rangle_{\mathrm{hf}} := \int_{N}h(s_{1},s_{2})\langle \nu_{1},\nu_{2}\rangle$ and $||\tilde{s}||^{2}_{\mathrm{hf}} := \int_{N}h(\tilde{s},\tilde{s})|\nu|^2$, for $\tilde{s}_{1},\tilde{s}_{2},\tilde{s}\in\Gamma(L\otimes\delta)$.\par
$\mathcal{H}_{P}$ will be the space of square integrable polarized sections of $L\otimes\delta_{P}$.
\end{definition}

We will define a function $f\in C^{\infty}(N)$ to be quantizable if $X_{f}$ preserves $\overline{P}$. 

\begin{proposition}
Let P be a complex polarization on a quantizable symplectic manifold $(N,\omega)$ and let $(L,h(\cdot,\cdot),\nabla)$ be a prequantization of M. Let $\delta_{P}$ be a square root of the canonical bundle $K_{P}$ and $f\in C^{\infty}(N)$ be such that $X_{f}$ preserves $\overline{P}$. The prequantization of f on the half-form Hilbert space as follows:
\begin{equation}
Q_{\mathrm{pre}}(f)\tilde{s} := (Q_{\mathrm{pre}}(f)s)\otimes \nu + s\otimes i\hbar\mathcal{L}_{X_{f}}\nu,
\label{eq:halfformcorrectionprequantumoperator}
\end{equation}
with $\tilde{s}$ being locally written as $\tilde{s} = s\otimes \nu$, $s\in\Gamma(L)$ and $\nu\in\Gamma(\delta_{P})$, such that $Q_{\mathrm{pre}}$ does not depend on the choice of $s$ and $\nu$. In the above formula
\begin{equation}
Q_{\mathrm{pre}}(f)s=f\cdot s+i\hbar \nabla_{X_f}s  
\end{equation}
is the standard prequantization of $f$ in the absence of the half-form correction.
\label{correctionprequantum}
\end{proposition}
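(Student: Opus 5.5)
We need to show two things: that the formula defining $Q_{\mathrm{pre}}(f)\tilde{s}$ does not depend on the local factorization $\tilde{s}=s\otimes\nu$, and that the resulting operator preserves the space of polarized sections. The natural tool is the Leibniz rule. Both $f+i\hbar\nabla_{X_f}$ on $\Gamma(L)$ and $i\hbar\mathcal{L}_{X_f}$ on $\Gamma(\delta_P)$ are first-order operators whose derivative part is differentiation along the same vector field $X_f$.

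\textbf{Step 1: well-definedness of the Lie derivative on $\delta_P$.} Since $X_f$ preserves $\overline{P}$, its flow preserves $\overline{P}$, hence preserves $P$ and the annihilation condition \eqref{halfformcondition}. Therefore $\mathcal{L}_{X_f}$ maps $\Gamma(K_P)$ to itself. We then define $\mathcal{L}_{X_f}$ on $\delta_P$ by requiring compatibility with $\phi$:
\[
\phi(\mathcal{L}_{X_f}\nu\otimes\nu+\nu\otimes\mathcal{L}_{X_f}\nu)=\mathcal{L}_{X_f}\phi(\nu\otimes\nu).
\]
Locally, for a nonvanishing $\nu$, this reads $\mathcal{L}_{X_f}\nu=\tfrac12 \big(\mathcal{L}_{X_f}\nu^2/\nu^2\big)\,\nu$, and one extends by the Leibniz rule.

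\textbf{Step 2: independence of the factorization.} Any other factorization of $\tilde{s}$ has the form $s'=gs$, $\nu'=g^{-1}\nu$ for a local nonvanishing smooth function $g$. Using $\nabla_{X_f}(gs)=X_f(g)s+g\nabla_{X_f}s$ and $\mathcal{L}_{X_f}(g^{-1}\nu)=-g^{-2}X_f(g)\nu+g^{-1}\mathcal{L}_{X_f}\nu$, we compute
\[
(Q_{\mathrm{pre}}(f)s')\otimes\nu'+s'\otimes i\hbar\mathcal{L}_{X_f}\nu'.
\]
The multiplication terms give $f\,s\otimes\nu$ unchanged. The two terms $\pm i\hbar\, g^{-1}X_f(g)\,s\otimes\nu$ cancel, and the remaining terms reproduce the original expression. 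Hence $Q_{\mathrm{pre}}(f)$ is a globally defined first-order differential operator on $\Gamma(L\otimes\delta_P)$.

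\textbf{Step 3: preservation of polarized sections.} Take $X\in\Gamma(\overline{P})$ and a polarized $\tilde{s}$. We want $\nabla_X Q_{\mathrm{pre}}(f)\tilde{s}=0$, and we get it from the commutator identity
\[
[\nabla_X,\,Q_{\mathrm{pre}}(f)]=i\hbar\,\nabla_{[X,X_f]}.
\]
On the $L$ factor this follows from the curvature formula $[\nabla_X,\nabla_{X_f}]-\nabla_{[X,X_f]}=-\tfrac{i}{\hbar}\omega(X,X_f)$ together with $X(f)=df(X)=\omega(X_f,X)$. The curvature term and $X(f)$ cancel, exactly as in the standard Kostant–Souriau computation. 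On the $\delta_P$ factor it follows from $[\mathcal{L}_X,\mathcal{L}_{X_f}]=\mathcal{L}_{[X,X_f]}$, which transfers to $\delta_P$ because the partial connection on $\delta_P$ is induced from the Lie derivative on $K_P$. Since $X_f$ preserves $\overline{P}$, we have $[X,X_f]\in\Gamma(\overline{P})$, so $\nabla_{[X,X_f]}\tilde{s}=0$ and therefore $\nabla_XQ_{\mathrm{pre}}(f)\tilde{s}=0$.

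I expect the main obstacle to be Step 3 on the half-form factor. There one must check that the partial connection on $\delta_P$ along $\overline{P}$ really coincides with the square root of $\mathcal{L}_X$ on $K_P$, and that commutators pass through the square-root isomorphism $\phi$. I would handle this locally, by writing $\nu=\sqrt{dz_1\wedge\dots\wedge dz_n}$ for holomorphic coordinates, so that everything reduces to the corresponding identities on $K_P$.
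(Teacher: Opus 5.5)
Your proposal is correct. The paper gives no proof of this proposition; it is quoted as background from the standard geometric quantization literature (Hall's book). Your argument is the standard one.

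Well-definedness is the $C^\infty(N)$-balancedness of $(s,\nu)\mapsto (Q_{\mathrm{pre}}(f)s)\otimes\nu+s\otimes i\hbar\mathcal{L}_{X_f}\nu$. Because both factors differentiate along the same $X_f$, the $\pm i\hbar\,g^{-1}X_f(g)\,s\otimes\nu$ terms cancel. Your Step 3 goes slightly beyond the statement as written, but it is what makes $Q_{\mathrm{pre}}(f)$ an operator on $\mathcal{H}_P$. It is the Kostant--Souriau identity $[\nabla_X,Q_{\mathrm{pre}}(f)]=i\hbar\nabla_{[X,X_f]}$. Your signs match the paper's conventions (curvature $-i\omega/\hbar$, $\iota_{X_f}\omega=df$).

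Two small points.

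In Step 1 you do not need the flow, which exists only for real $f$, whereas the paper's Remark also allows complex Hamiltonians. The identity $\iota_Y\mathcal{L}_{X_f}\alpha=\mathcal{L}_{X_f}\iota_Y\alpha-\iota_{[X_f,Y]}\alpha$ shows directly that $\mathcal{L}_{X_f}$ preserves $\Gamma(K_P)$ whenever $[X_f,\Gamma(\overline{P})]\subset\Gamma(\overline{P})$. Only $\overline{P}$ enters here, not $P$.

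The obstacle you anticipate in Step 3 dissolves in one line. Take a local nonvanishing $\nu$, write $\mathcal{L}_Y(\nu^2)=c_Y\,\nu^2$, and set $D_Y\nu=\tfrac12 c_Y\nu$. In $[D_X,D_{X_f}]\nu$ the terms $\tfrac14 c_Xc_{X_f}\nu$ cancel, leaving
\[
\tfrac12\bigl(X(c_{X_f})-X_f(c_X)\bigr)\nu=\tfrac12 c_{[X,X_f]}\nu=D_{[X,X_f]}\nu .
\]
The middle equality is just $[\mathcal{L}_X,\mathcal{L}_{X_f}]=\mathcal{L}_{[X,X_f]}$ applied to $\nu^2$. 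The cross terms between the $L$ and $\delta_P$ factors cancel in the commutator on $L\otimes\delta_P$, so the identity on the tensor product follows.
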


We then have that if $f\in C^{\infty}(N)$ is quantizable with half-form correction we define its quantization as:

\begin{equation}
Q(f) := Q_{\mathrm{pre}}(f)   
\end{equation}

These operators verify the commutation relation desired, $[Q(f),Q(g)]/(i\hbar)=Q(\{f,g\})$, on the space of polarized sections of $L\otimes\delta_{P}$. In addition, if $f$ is real valued and $X_{f}$ preserves $\overline{P}$, then $Q(f)$ will be symmetric. 

For the observables $g\in C^{\infty}(N)$ which are not polarization preserving we can, sometimes, still define a quantization operator by employing the following solution: if $g\in C^{\infty}(N)$ does not preserve the polarization but $g=g(f_{1},...,f_{n})$ with $n\in\mathbb{N}$ and all of $f_{i}$ preserve the polarization, then we can take the following operator to be the quantization of $g$:

\begin{equation}
    Q(g) := g(Q_{\mathrm{pre}}(f_{1}),...,Q_{\mathrm{pre}}(f_{n})) 
    \label{multiplicativityquantization}
\end{equation}

\subsection{Generalized Coherent State Transform}
\label{subs:gcst}

In Section \ref{sub:Hamiltonian_Flows_Imaginary_Time}, we described how, given an Hamiltonian function $H$, 
its Hamiltonian flow analytically continued to imaginary time can be used to deform K\"ahler structures, thus defining a continuous family of K\"ahler polarizations. 
In fact, this deformation can be lifted to the Hilbert spaces of half-form corrected polarized quantum states obtained via geometric quantization. Let the initial polarization of the system be $P_{0}$ evolve in complex time to another polarization $P_{\tau}$, with the two K\"ahler structures being related by the map in \eqref{eq:biholomorphism_complexstructure}; several examples show that it is natural to consider the  map:

\begin{equation}
    U_{\tau}: \mathcal{H}_{P_0}\rightarrow\mathcal{H}_{P_{\tau}}.
\end{equation}
where the map $U_{\tau}$ is the so-called generalized coherent state transform (gCST). For purely 
imaginary time $\tau = is$, 

\begin{equation}\label{timescst}
    U_{s} = \left(e^{-\frac{i}{\hbar}\tau Q_{\mathrm{pre}}(H)}\circ e^{\frac{i}{\hbar}\tau Q(H)}\right)\bigg|_{\tau=is}.
\end{equation}
(See \cite{matos:mera:mourao:mourao:nunes:2023} for the application to the fractional quantum Hall effect on the 2-sphere.)

A bit of the motivation for the form of the operator presented above will now be explained. The part of the operator that consists of the prequantum operator of $H$, $e^{-\frac{i}{\hbar}\tau Q_{\mathrm{pre}}(H)}$, is natural since it is a straightforward generalization of time evolution in quantum mechanics or, indeed, of the evolution of polarized sections along a real-time Hamiltonian flow. This map, as we shall see, when acting on the wave functions of the quantum Hall effect, does not preserve the Hilbert space of polarized sections but rather maps $J_{0}-$polarized states to $J_{\tau}-$polarized states. Now the other term, $e^{\frac{i}{\hbar}\tau Q(H)}$, comes as a correction. Indeed, for complex time $\tau$, the evolution by the prequantization operator of the Hamiltonian is non-unitary. Then, composition with the operator  $e^{\frac{i}{\hbar}\tau Q(H)}$, which preserves the Hilbert space, makes the whole operator $U_{s}$ unitary in several known examples or, at least, asymptotically unitary as $s\to +\infty$ in other examples.
 When the gCST is not unitary, the lack of unitarity can be seen as a measure of the non-equivalence of the quantum theories defined by the different choices of polarization. It will be with this operator that we will act on the quantum states of the quantum Hall effect.

\subsection{Theta Functions}
\label{sub:theta_functions}

In this section, we review the definition of theta functions and some of their properties, since these play an essential role in the description of the quantum Hall effect on the torus. For a detailed account of the theory of theta functions on Abelian varieties see, for example, 
\cite{kempf}.
The theta function is an entire function on $\mathbb{C}$ that is defined (for one variable) by:

\begin{equation}
\theta(z,\tau) = \sum_{n\in\mathbb{Z}} e^{\pi in^2\tau+2\pi inz},
\label{thetafunctiondef}
\end{equation}
with the parameter $\tau$ taking values in the upper-half plane, i.e. $\mathrm{Im}(\tau)>0$. The series  converges absolutely and uniformly on compact sets in the complex plane. These functions obey a quasi-periodicity condition with respect to the lattice $\Lambda=\mathbb{Z}\oplus \tau\mathbb{Z}$. Let us consider an element of the lattice above $\gamma= a + b\tau\in\Lambda$ with $a,b\in\mathbb{Z}$:
\begin{equation}
\theta(z+\gamma) = e^{-i\pi \tau b^2-2\pi ibz}\theta(z).
\label{quasiperiodicitythetafunction}
\end{equation}
This means that the theta function is an holomorphic section of an holomorphic line bundle over the complex torus (elliptic curve) $E_{\tau}=\mathbb{C}/\Lambda$, the theta line bundle. The complex torus $E_{\tau}$ is a Kähler manifold with a complex structure naturally induced by the standard complex structure on $\mathbb{C}$ and symplectic form given by 
$$\omega = 2\pi k dx \wedge dy, 
$$
where the integer $\, k$ is the level ($k=1$ for (\ref{thetafunctiondef})). The parameter $\tau$ determines the complex structure of the torus up to the modular group action: 
Two elliptic curves, $E_\tau, E_{\tau'}$ 
 are biholomorphic if and only if
the two moduli parameters $\tau$ and $\tau'$ are in the same orbit of the  modular group $\mathrm{SL}(2,\mathbb{Z})$. 

One can also define theta functions with characteristic, which give holomorphic sections of line bundles obtained from the theta bundle by a translation:

\begin{equation}
\vartheta\begin{bmatrix}
a\\
b
\end{bmatrix}(z,\tau) = \sum_{n\in\mathbb{Z}} e^{i\pi\tau(n+a)^2+2\pi i(n+a)(z+b)} = e^{i\pi a^2\tau+2i\pi a(z+b)}\theta(z+a\tau+b,\tau).
\label{eq:thetafunctionchardef}
\end{equation}

Comparing this with the definition of the theta function \eqref{thetafunctiondef} it is clear that the theta function with characteristic 0 reduces to the theta function:

\begin{equation*}
\vartheta\begin{bmatrix}
0 \\
0 
\end{bmatrix}(z,\tau) = \theta(z,\tau).
\end{equation*}

It is also convenient to know their quasi-periodicity properties, so let $\gamma = c+d\tau\in \Lambda$. We have that:

\begin{equation*}
\vartheta\begin{bmatrix}
a\\
b
\end{bmatrix}(z+\gamma,\tau) = e^{-i\pi\tau d^2-2i\pi dz}e^{2i\pi(ac-db)}\vartheta\begin{bmatrix}
a\\
b
\end{bmatrix}(z,\tau).
\end{equation*}

Comparing the multipliers of the theta functions with the ones just obtained for the theta functions with characteristics we have that they differ by a flat multiplier:
\begin{equation*}
    \chi_{\gamma} = e^{2i\pi(ac-db)},
\end{equation*}
with $\gamma = c+d\tau\in\Lambda$. These unitary multipliers $(\chi_{\gamma})_{\gamma\in\Lambda}$ determine a unitary character of the lattice:
\begin{align*}
    \chi :& \Lambda\rightarrow U(1)\\
    &\gamma\mapsto \chi_{\gamma}.
\end{align*}

Moreover, meromorphic functions on $E_\tau$ are then described by quotients of theta functions. This is the content of Abel's theorem. 

To sum up this section, it is also useful to introduce, in light of the present work, a theta function with specific characteristic:

\begin{equation}
\theta_{11}(z,\tau) = \vartheta\begin{bmatrix}
\frac{1}{2} \\
\frac{1}{2}
\end{bmatrix}(z,\tau) = e^{\frac{i\pi\tau}{4}+2i\pi nz}\theta\left(z+\frac{1}{2}(1+\tau),\tau\right).
\end{equation}

This theta function has zeros at $\Lambda$ and their quasi periodicity conditions for $\gamma = a+b\tau$ are given by the following equation:

\begin{equation}
\theta_{11}(z+\gamma,\tau) = (-1)^{a+b}e^{-i\pi\tau b^2-2\pi ibz}\theta_{11}(z,\tau). 
\label{eq:theta11quasiperiodicity}
\end{equation}

\section{The Lowest Landau Level for flat geometry and non-periodic quadratic Hamiltonian}
\label{sec:flat_geometry}
We consider electrons living on a Kähler surface $(X,iF, J,\gamma)$ and subject to a
uniform external magnetic field. The most natural way of expressing the magnetic field in geometric terms is to consider a 2-form $F\in\Omega^{2}(X;\mathbb{C})$, which we will assume to be closed and non-degenerate. We consider a Hermitian holomorphic line bundle $L\rightarrow X$ over the Riemann surface, with  Chern connection 
$\nabla$ and curvature given by $F$. The fact that the magnetic field is uniform corresponds to the expression  $iF = \omega = B\sqrt{\det(g)}dx\wedge dy$, in local coordinates $(x,y)$, with $B = \frac{ 2\pi\deg(L)}{\mathrm{area}(X)}$ and $\omega$ is the area form on the surface. Throughout the remainder of this work, we set Planck's constant in the geometric quantization formalism to $\hbar = 1$.

The single-particle Hilbert space of the quantum theory is described by the square integrable sections of the line bundle $L$. The single-particle Hamiltonian is described
by the Bochner Laplacian (also known as the magnetic Laplacian in this context), whose groundstate subspace, known in the physics literature
as the lowest Landau level (LLL), is the space of holomorphic sections of $L$,  $H^{0}(X,L)$.  Geometric quantization allows us, therefore, to identify the lowest Landau level with the Hilbert
space of quantum states for the Kähler quantization of $X$.

In this section, we describe the LLLs on the torus with flat geometry deformed by Hamiltonian flows in imaginary time. In Section \ref{section:oneparticlestates} we describe the process in detail. We determine explicit the single-particle holomorphic states and apply the gCST to them. We use these results to study the particle density profile. 

We then apply an analogous procedure for the multiparticle states. We start by studying, in Section \ref{section:integerquantumhalleffect}, the integer quantum Hall effect, in which the groundstate subspace is fully filled. 
We present some of the mathematical notions used to represent many-particle states, as well as how they evolve by imaginary time Hamiltonian flow. We use these results to study the particle density profile and compare the results with the ones in literature.

In Section \ref{section:fractionalquantumhalleffect}, we describe the FQHE, the wavefunctions here are the Laughlin states. These are states of $N_{e}$ electrons and filling fraction of $\nu=\frac{N_{e}}{N_{\phi}}=\frac{1}{k}$ with $k$ an odd number, where $N_{\phi} = \phi/\phi_{0} = B\cdot \mathrm{area}(X)/(2\pi)$ is the number of flux quanta. Here, we understand how these states evolve by the imaginary time Hamiltonian flow. We use these results to study the particle density profile. At the end of the Section, we comment on the deformation of the Laughlin state, induced by the deformation of geometry along the the Hamiltonian flow in imaginary time flow that we consider, and the Tao-Thouless state (for filling fraction $\nu= 1/3$).

\subsection{One-Particle States}
\label{section:oneparticlestates}

We consider an electron subject to a uniform magnetic field on an elliptic curve $X=E_{\tau}=\mathbb{C}/\Lambda$, with $\Lambda = \mathbb{Z}\oplus \tau\mathbb{Z}$ and $\tau=\tau_{1}+i\tau_{2},\, \tau_2>0$. 

Following the prescription of geometric quantization, the symplectic form on $E_{\tau}$ will be integral, as explained in Section \ref{sub:geometric_quantization}, and it will also contain the information about the external magnetic field. The relation between the magnetic field $B$ and the total number of flux quanta is $N_{\phi}=\phi/\phi_{0} = B\cdot \mathrm{area}(E_\tau)/(2\pi)$. The symplectic form on $E_{\tau}$ is then, in local flat coordinates coordinates $(x,y)$ induced from the quotient $\C/\Lambda$, $$\omega = 2\pi N_{\phi}dx\wedge dy.$$ The corresponding K\"ahler metric on $E_\tau$ is then $$g = \frac{2\pi N_{\phi}}{\tau_{2}}|dz|^2,$$ with $z=x+\tau y$ the holomorphic coordinate in the universal cover $\mathbb{C}$. 

We can now construct the line bundle, $L^{N_{\phi}}$, whose space of square-integrable sections gives the single-particle Hilbert space. $L^{N_{\phi}}$ can be defined as a quotient of the trivial complex line bundle over $\C$, so that its sections can be viewed as quasi-periodic functions, with respect to $\Lambda,$ on $\C$.  
The choice of trivializing section over $\C$ corresponds to a choice of gauge.
  We will work in a unitary gauge, also called the Landau gauge, commonly used the QHE literature, so that the inner product has a simpler form, which will be useful for normalizing the wavefunctions. Consider the system of unitary multipliers---determining the $\Lambda$-action defining $L^{N_{\phi}}=\left(\mathbb{C}\times \mathbb{C}\right)/\Lambda$---given by
\begin{equation}
\widetilde{e}^{N_{\phi}}_{\gamma}:\mathbb{C}\rightarrow \mathrm{U}(1)\subset  \mathbb{C}^{*},
\end{equation}
for $\gamma=a+b\tau\in\Lambda$ and
\begin{equation}
   \widetilde{e}^{N_{\phi}}_{\gamma = a+b\tau}(z) = e^{-2\pi i N_{\phi}bx}.
   \label{multipliersunitary}
\end{equation}
The tilde notation $\widetilde{e}^{N_{\phi}}_\gamma$ is employed to emphasize the unitary nature of these multipliers, distinguishing them from multipliers $e^{N_{\phi}}_\gamma$ in a holomorphic gauge to appear later. It is simple to check that the  $1$-form given by $$A(z) = 2\pi iN_{\phi}y dx.$$ is consistent with the above system of multipliers, i.e.,
\begin{equation}
  A(z+\gamma) = A(z) -  d\log \widetilde{e}^{N_{\phi}}_{\gamma}(z),
\end{equation}
and, thus, determines a connection on $L^{N_{\phi}}$. 
The Chern connection is given by
\begin{equation}
    \nabla_{Y}s = Y\cdot s + A(Y)\cdot s,
\end{equation}
for a vector field $Y$ and section $s$. 

We will now consider the K\"ahler polarization of $E_\tau$, in the formalism of geometric quantization, to obtain the Hilbert space of (polarized) quantum states.  The Kähler polarization which is given by the distribution
\begin{equation}
  P = \Big\langle \frac{\partial}{\partial z}\Big\rangle. 
\end{equation}
In order to determine which of the sections $s$ of the line bundle are polarized, and hence will belong to the quantum Hilbert space of the system, we need to solve the equation for holomorphic sections
\begin{equation}
\nabla_{\frac{\partial}{\partial\bar{z}}}s = 0 \Rightarrow s = s(z).
\end{equation}
Using the connection 1-form in the unitary gauge $A(z) = 2\pi iN_{\phi}y dx$ we obtain that:
\begin{align*}
&\nabla_{\frac{\partial}{\partial\bar{z}}}s=0 \Leftrightarrow s(z) = e^{i\pi N_{\phi}\tau y^2}f(z),
\end{align*}
where $f(z)$ is a holomorphic function. It turns out that these holomorphic functions are the theta functions with characteristics in Eq.\eqref{eq:thetafunctionchardef}, $\theta_{l/N_{\phi}}(N_{\phi}z,N_{\phi}\tau):=\vartheta\begin{bmatrix}
\tfrac{l}{N_{\phi}}\\
0
\end{bmatrix}(N_{\phi}z,N_{\phi}\tau)$, with $l=0,...,N_{\phi}-1$. To prove this, one just needs to check that the section $s$ with $$f(z) = \theta_{l/N_{\phi}}(N_{\phi}z,N_{\phi}\tau)$$ transforms in the correct way with the unitary multipliers given by \eqref{multipliersunitary}, i.e. $$s(z+\gamma)=\widetilde{e}^{N_{\phi}}_{\gamma}(x)\cdot s(z).$$  
This defines 

The holomorphic sections in the Landau gauge are then given by
\begin{equation}
s\in\Gamma(E_{\tau},L^{N_{\phi}})\Rightarrow \psi^{(l)}_{\mathrm{LLL}}(z,\tau):= s^{(l)}(z) = e^{i\pi N_{\phi}\tau y^2}\theta_{l/N_{\phi}}(N_{\phi}z,N_{\phi}\tau).
\label{LLLunitary}
\end{equation}
These polarized sections form, up to an overall normalization independent of $\ell$, a unitary basis of $H^0(E_\tau, L^{N_\phi})$, and they are the single-particle wavefunctions of the lowest Landau level used in this work.

\subsection{Evolution of one-particle states under the gCST}
\label{sub:evolutionsingleparticle}
In this Section, our aim is to describe how the polarized sections of $L^{N_{\phi}}$ evolve under the generalized coherent state transform (gCST) corresponding to an Hamiltonian flow in imaginary time generated by the Hamiltonian function $$H(x,y)= \frac{y^2}{2}.$$ Note that $H$ is actually well-defined only 
on the universal covering $\C$ of $X=E_\tau$. As we will see, its Hamiltonian flow induces a change in the biholomorphism class of $X$ given by a change in the modular parameter $\tau$ along the flow.

As explained in Section \ref{sub:geometric_quantization}, the evolution of the lowest Landau level (LLL) should include the half-form correction so that we now consider the line bundle $L^{N_{\phi}}\otimes K^{1/2}$, where $K$ is the canonical line bundle of $E_{\tau}$ with respect to the chosen Kähler polarization. Since $dz$ is a global trivialization of $K$, we can choose the global trivialization of $K^{1/2}$ to be $\sqrt{dz}$.

The sections of the line bundle $L^{N_{\phi}}\otimes K^{1/2}$ take the form $s\otimes \nu$, where $s \in \Gamma(L^{N_{\phi}})$ and $\nu \in \Gamma(K^{1/2})$. The prequantum operators,  from Eq.~\eqref{eq:halfformcorrectionprequantumoperator}, give that the gCST takes the form
\begin{equation}
   U_{s} = U_{1,s} \otimes U_{2,s}=\left(e^{-itQ_{\mathrm{pre}}(H)}e^{\frac{it}{2}(Q_{\mathrm{pre}}(y))^2} \otimes 
   e^{-it\left(i\mathcal{L}_{X_{H}}\right)}  \right)\bigg|_{t=is},
\end{equation}
where the second factor $U_{2,s}$ in the tensor product corresponds to the evolution of the half-form. In the equation above, we observe that $X_y$ preserves $dz$, so no additional contribution to $U_{2,s}$ arises.

Considering the connection 1-form $A = 2\pi i N_{\phi} y dx$ on the line bundle $L^{N_{\phi}}$, the prequantum operator associated with $H$ is
\begin{equation}
   Q_{\mathrm{pre}}(H) = H + i\nabla_{X_{H}} = H +iX_{H}-2\pi N_{\phi}ydx(X_{H}) = H +iX_{H}- y^2 = -H + iX_{H},
\end{equation}
with $X_{H} = \frac{y}{2\pi N_{\phi}} \frac{\partial}{\partial x}$ as the Hamiltonian vector field. Similarly, the prequantum operator associated with $y$ is:

\begin{equation}
    Q_{\mathrm{pre}}(y) = y+i\nabla_{X_{y}} = y + iX_{y}-2\pi N_{\phi}ydx(X_{y}) = y + iX_{y} - y = iX_{y},
\end{equation}
where $X_{y} = \frac{1}{2\pi N_{\phi}} \frac{\partial}{\partial x}$. Finally, the quantum operator of $H$ will be given by:

\begin{equation}
    Q(H) = \frac{1}{2}(Q_{\mathrm{pre}}(y))^2 = -\frac{1}{2}\frac{1}{(2\pi)^2}\frac{1}{(N_{\phi})^2}\frac{\partial^{2}}{\partial x^2}.
\end{equation}
The first factor of the gCST is then
\begin{equation}
   U_{1,s}=\left(e^{it\frac{y^2}{2}}e^{t\frac{y}{2\pi N_{\phi}}\frac{\partial}{\partial x}}e^{\frac{-it}{2N^2_{\phi}}\frac{1}{(2\pi)^2}\frac{\partial^2}{\partial x^2}}\right)\bigg|_{t=is}.
   \label{eq:GCST1}
\end{equation}
Acting with this operator on the wavefunctions of the LLL in Eq.~\eqref{LLLunitary}, we obtain the result:
\begin{equation}
 U_{1,s}
\left(\theta_{l/N_{\phi}}(N_{\phi}z,N_{\phi}\tau)e^{i\pi N_{\phi}\tau y^2}\right) = \theta_{l/N_{\phi}}(N_{\phi}z_{s},N_{\phi}\tau_{s})e^{i\pi\tau_s N_{\phi}y^2},
\label{eq:evolutionLLL}
\end{equation}
with $\tau_{s} = \tau + \frac{is}{2\pi N_{\phi}}$ and $z_{s} = z + is \frac{y}{2\pi N_{\phi}} = x + \tau_{s} y$. 

A few remarks on the action of the first factor of the gCST on the LLL wavefunctions are necessary. The prequantum operator in Eq.~\eqref{eq:GCST1} is responsible for changing the complex coordinate $z$ to $z_s$, while the quantum operator changes the modular parameter $\tau$ to $\tau_s$. Furthermore, this operator transforms the section of the line bundle $L^{N_{\phi}} \to E_{\tau}$, corresponding to $\theta_{l/N_{\phi}}(N_{\phi}z, N_{\phi}\tau) e^{i\pi N_{\phi} \tau y^2}$, into a section of the line bundle $L^{N_{\phi}}_{s} \to E_{\tau_{s}}$, given by $\theta_{l/N_{\phi}}(N_{\phi}z_{s}, N_{\phi} \tau_{s}) e^{i\pi N_{\phi} \tau_{s} y^2}$. The multipliers of this new line bundle, in the holomorphic gauge, are
\begin{equation}
    e^{N_{\phi}}_{\gamma}(z_{s}) = e^{-i\pi N_{\phi}\tau_{s}b^2-2\pi ibN_{\phi}z_{s}}.
    \label{multipliersevolved}
\end{equation}
The factor $e^{-s\frac{y^2}{2}}$ is necessary to correct the gauge factor of the unitary section, which is $e^{i\pi N_{\phi} \tau y^2}$. 

Note that in the formalism for imaginary time Hamiltonian flows that we reviewed briefly in Section \ref{sub:Hamiltonian_Flows_Imaginary_Time}, one considers a smooth Hamiltonian on $X$ whose flow in imaginary time gives rise to different, but equivalent or biholomorphic, complex structures. In the present situation, we have a non-trivial generalization of this formalism since the Hamiltonian function $H$ is smooth on the universal covering $\C$ of $X=E_\tau$. As described above, however, this Hamiltonian flow in the universal cover induces a non-trivial deformation of the modular parameter, from $\tau$ to $\tau_s$, thus changing the biholomorphism class of $X$. Of course, the generalization of this phenomenon to other families of K\"ahler manifolds would be very interesting to explore.

We now compute the $L^2$-squared norm of the evolved section $s_{s}$. The normalization of the wavefunction is essential because we must work with normalized wavefunctions when considering particle density later. In the unitary gauge, the normalization is found to be (using standard Gaussian integral techniques)
\begin{align}
||s_{s}||^2_{L^2} &= \int_{0}^{1}\int_{0}^{1}dxdy \Big|U_{1,s}(\theta_{l/N_{\phi}}(N_{\phi}z,N_{\phi}\tau)e^{i\pi N_{\phi}\tau y^2})\Big|^2 \nonumber\\
&= \frac{1}{\sqrt{2N_{\phi}\tau_{2_s}}}.
\label{eq:normalizationunitary}
\end{align}
From this computation, we see that the normalization of the sections depends only on the imaginary part of the modular parameter, $\tau_{2}$. 

Now that we understand how the first factor of the gCST, $U_{1,s}$, acts on the LLL wavefunctions, let us study the action of the second factor $U_{2,s}$ on the half-forms. Note that $\mathcal{L}_{X_{H}}$ acts on half-forms as follows:
\begin{align*}
    \mathcal{L}_{X_{H}}(\sqrt{dz}\sqrt{dz}) &=  \mathcal{L}_{X_{H}}(\sqrt{dz})\sqrt{dz} + \sqrt{dz}\mathcal{L}_{X_{H}}(\sqrt{dz})\\
    &= 2\mathcal{L}_{X_{H}}(\sqrt{dz})\sqrt{dz},
\end{align*}
where we used the commutativity of the tensor product in one-dimensional vector spaces. Thus:
\begin{equation}
\mathcal{L}_{X_{H}}(dz) = d(\iota_{X_{H}}dz) = \frac{dy}{2\pi N_{\phi}} \Rightarrow e^{isL_{X_H}}dz = d\left(z+\frac{isy}{2\pi N_{\phi}}\right)= dz_{s}
\label{evolutionhalfform}
\end{equation}
so that, for $t=is,$
\begin{equation}
 U_{2,s} \sqrt{dz} = \sqrt{dz_{s}}.
 \label{halfformevolution}
\end{equation}

Combining the result of the LLL evolution with the first factor of the gCST Eq.~\eqref{eq:evolutionLLL} and the half-form evolution under the second factor Eq.\eqref{halfformevolution}, we obtain, 
\begin{equation}
U_{t}\left(\theta_{l/N_{\phi}}(N_{\phi}z,N_{\phi}\tau)e^{i\pi N_{\phi}\tau y^2}\otimes dz^{1/2}\right)\bigg|_{t=is}= \theta_{l/N_{\phi}}(N_{\phi}z_{s},N_{\phi}\tau_{s})e^{i\pi\tau_{s} N_{\phi}y^2}\otimes dz_{s}^{1/2}.
\end{equation}
Since $L^{N_{\phi}}$ is an Hermitian line bundle, the square root of the canonical line bundle $K^{1/2}$ inherits an Hermitian structure, with the Hermitian metric denoted by $h$:

\begin{equation}
h(\sqrt{dz}) = ||dz^{1/2}||^{2} = \sqrt{\frac{\tau_{2}}{2\pi N_{\phi}}}.  
\end{equation}

We can now compute the norm square of $s\otimes dz^{1/2}$, to obtain
\begin{align}
||s_{s}\otimes dz_{s}^{1/2}||^{2}_{L^2} &= \int_{E_{\tau}}dx\wedge dy \; h(s_{s}\otimes dz_{s}^{1/2})\nonumber\\
&= \frac{1}{2N_{\phi}\sqrt{\pi}}.
\label{eq:normtotalsection}
\end{align}

From Eq.\eqref{eq:normtotalsection}, we conclude that the norm is independent of $s$, which proves that, with the half-form correction, the gCST is a unitary isomorphism of the quantum Hilbert spaces associated with the polarizations $\left\langle \frac{\partial}{\partial z} \right\rangle$ and $\left\langle \frac{\partial}{\partial z_{s}} \right\rangle$. 

The limit $s\rightarrow \infty$ is an interesting case to study in the evolution described earlier. Under this extreme deformation of the torus geometry, Dirac delta distributions naturally arise, as there is a convergence of the sections of the bundle $L^{N_{\phi}}\otimes K^{1/2}$, in a distributional sense, to Dirac delta functions centered at the points $n+l/N_{\phi}$, with $n\in\mathbb{Z}$. These points correspond to Bohr-Sommerfeld cycles in the geometric quantization of $X$ in a real polarization to which the K\"ahler polarizations converge as $s\to\infty$.
(See \cite{BaierMouraoNunes} for a general discussion of the quantization of complex tori along families of flat K\"ahler polarizations and of the corresponding degeneration to real polarizations with the consequent appearance of distributional wave functions.)

 Considering the case where the initial modular parameter takes the (singular) value $\tau = 0$, in the limit of $s\rightarrow \infty$, the evolution of the wavefunctions with the half-form correction becomes
\begin{align*}
\lim_{s\rightarrow \infty} U_{s}\left(\theta_{l/N_{\phi}}(N_{\phi}z,0)\otimes dz^{1/2}\right) &= \sum_{n\in\mathbb{Z}}\frac{1}{\sqrt{N_{\phi}}}\delta\left(y+\left(n+\frac{l}{N_{\phi}}\right)\right)e^{2\pi ixN_{\phi}\left(n+\frac{l}{N_{\phi}}\right)}\otimes dy^{1/2},
\end{align*}
with $l=0,...,N_{\phi}-1$. These correspond to the (distributional) polarized sections for the real polarization $\Big\langle \frac{\partial}{\partial x}\Big\rangle$. The infinite imaginary time gCST, which we will describe below, up to a total phase factor, gives the discrete Fourier transform in $\mathbb{Z}_{N_{\phi}}$. This limit $s\to\infty$ of the one-particle states 
gives, see Sec.~\ref{subsec:EvolutionFQHE}, an adiabatic deformation between the Laughlin state and the Tao-Thouless state (for $\nu=\frac13$) for which the torus has the geometry of an infinitely thin cylinder.

Before analyzing the evolution of many-particle states, we compute the electronic density, which in first quantization takes the following form
\begin{equation}
    \rho(z) = \psi(z)\overline{\psi(z)},
    \label{densitysecondquantization}
\end{equation}
with $\psi$ the normalized wavefunction.
We are interested in how the particle density evolves as $s\rightarrow \infty$, a limit equivalent to $\tau_{2_s}\rightarrow \infty$. The effective density along the $y$-direction is obtained by integration over $x$
\begin{align}
\rho(y) &= \sqrt{2N_{\phi}\tau_{2_s}}\int_{0}^{1}dx\Big|\theta_{l/N_{\phi}}(N_{\phi}z_s,N_{\phi}\tau_s)e^{i\pi N_{\phi}\tau_s y^2}\Big|^2\nonumber\\
&= \sqrt{2N_{\phi}\tau_{2_s}}\sum_{n\in\mathbb{Z}}e^{-2\pi N_{\phi}\tau_{2_s}\left(y+\left(n+\frac{l}{N_{\phi}}\right)^2\right)}.
\label{eq:particledensitysingle}
\end{align}
In the limit $\tau_{2_s}\rightarrow \infty$, the integrated density becomes a Dirac delta distribution:

\begin{equation}
    \rho(y) = \sum_{n\in\mathbb{Z}} \delta\left(y+n+\frac{l}{N_{\phi}}\right)
    \label{integrateddensitysingleparticle}
\end{equation}
with $l=0,...,N_{\phi}-1$. Hence, in the limit $\tau_{2_s}\rightarrow \infty$ the density profile of the single-particle state approaches Dirac delta distributions centered at the points $n+l/N_{\phi}$, with $n\in\mathbb{Z}$. In Fig.\ref{fig:singleparticledensity}, we give a numerical illustration of this fact.

\begin{figure}[h]
    \centering
    \includegraphics{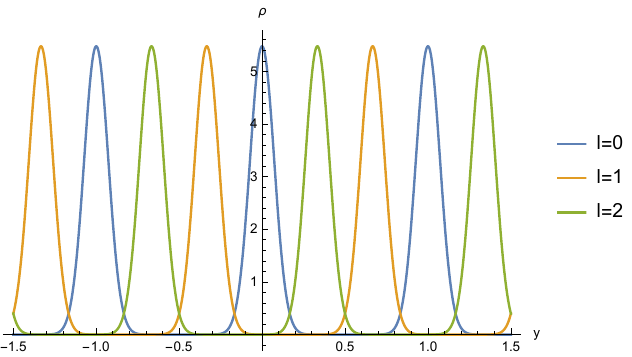}
    \caption{Density Profile for single particle states- $N_{\phi}=3$- $\tau = 5i$}
    \label{fig:singleparticledensity}
\end{figure}

Numerically, the limit $\tau_{2_s}\rightarrow \infty$ corresponds to values of $\tau_{2_s}$ that are not excessively large due to convergence issues with the theta functions. Thus, in the plot in Fig.\ref{fig:singleparticledensity}, we used $\tau_{2_s}=5i$. From the figure, it is clear that Gaussians are centered at the points $y=0,1,2,\dots$ for $l=0$, $y=-1/3,2/3,\dots$ for $l=1$ and $y=-2/3,1/3,\dots$ for $l=2$, which is consistent with the result for the integrated density of the single-particle states in Eq.~\eqref{integrateddensitysingleparticle}.

\subsection{Integer Quantum Hall Effect}
\label{section:integerquantumhalleffect}

In this section, we begin examining multiparticle states by focusing on the integer quantum Hall effect (IQHE). In these systems, the lowest Landau level (LLL) is fully occupied, corresponding to a filling fraction $\nu=1$, where the number of electrons equals the total flux quanta, $N_{e}=N_{\phi}\equiv N$. The fully filled LLL state corresponds to a unique state $\psi\in \Lambda^{N}H^{0}(E_{\tau},L^{N})$ (since $\Lambda^{N}H^{0}(E_{\tau},L^{N})$ is $1$-dimensional), where $L^{N}$ is the line bundle constructed in the previous section. Here, $\psi\in \Lambda^{N}H^{0}(E_{\tau},L^{N})$ can be understood as a fully antisymmetric section $\psi(z_{\sigma(1)},\dots,z_{\sigma(N)})=\mathrm{sgn}(\sigma)\psi(z_{1},\dots,z_{N})$, $\sigma\in S_N$ , of the external $N$th tensor product $L^{\boxtimes N}$ 
\begin{equation}
L^{\boxtimes N} = L\boxtimes \dots\boxtimes L = p_{1}^{*}L\otimes \dots \otimes p_{N}^{*}L \longrightarrow X\times\dots \times X = X^{N},
\label{externallinebundle}
\end{equation}
where $p_{i}$ is the $i$th canonical projection in the $N$-fold Cartesian power of $X$, $i=1,\dots, N$.

To construct the sections of $L^{\boxtimes N}$ we could use the basis previously found for $H^{0}(E_{\tau},L^{N})$, as $L^{\boxtimes N}$ is formed by the tensor product of $L$.  
In the unitary gauge basis for $H^{0}(E_{\tau}, L^{N})$, as in Eq.\eqref{LLLunitary}, we can represent a section of $L^{\boxtimes N}$ as:
\begin{equation*}
    \psi(z_{1},\dots,z_{N}) = \det\left[\theta_{i/N}(Nz_{j},N\tau)e^{i\pi N \tau y^2_j}\right]_{1\leq i,j \leq N}.
\end{equation*}
However, in order to address more general filling fractions, we adopt a different approach. Fixing $N-1$ variables (e.g., all except $z_{i}$), $\psi$ transforms as a section of $L^{N}$. Additionally, because $\psi$  represent a fermionic wavefunction, it is antisymmetric and has $N-1$ zeros at the positions of complementary variables. Thus, we can express $\psi$ as:
\begin{equation}
    \psi(z_{1},\dots,z_{N}) = f(z_{1},\dots,z_{N})\prod_{1\leq i,j\leq N}\theta_{11}(z_{i}-z_{j},\tau)e^{i\pi \tau N \sum_{j=1}^{N}y^2_j}.
    \label{preinteger}
\end{equation}
Given that $\theta_{11}(z_{i}-z_{j},\tau)\sim z_{i}-z_{j}$ for small $z_{i}-z_{j}$, it correctly reflects the $N-1$ zeros in the complementary positions. Furthermore, since $\theta_{11}(z_{i}-z_{j},\tau)$ is antisymmetric, $f(z_{1},\dots,z_{N})$ must be a symmetric to ensure that $\psi(z_{1},\dots,z_{N})$ is antisymmetric as required. To determine $f(z_{1},\dots,z_{N})$, we use the fact that it should transform such that fixing $N-1$ variables results in a section of $L^{N}$. For $\gamma = a+b\tau\in\Lambda$:
\begin{align*}
&\frac{\psi(z_{1}+\gamma,\dots,z_{N})}{\psi(z_{1},\dots,z_{N})} \nonumber\\
&= \frac{f(z_{1}+\gamma,\dots,z_{N})}{f(z_{1},\dots,z_{N})}(-1)^{(N-1)(a+b)}e^{-i\pi(N-1)\tau b^2}e^{-2i\pi b\sum_{j=2}^{N}(z_{1}-z_{j})}e^{2\pi i\tau N b y_1+i\pi \tau N b^2} \\
&= \widetilde{e}_{\gamma}^{N}(z_{1}) = e^{-2 i\pi N b x_1},  
\end{align*}
where we used Eq.~\eqref{eq:theta11quasiperiodicity} which relates $\theta_{11}(z_{i}-z_{j}+\gamma,\tau)$ and $\theta_{11}(z_{i}-z_{j},\tau)$. We conclude that
\begin{equation}
\frac{f(z_{1}+\gamma,\dots,z_{N})}{f(z_{1},\dots,z_{N})} = (-1)^{(N-1)(a+b)}e^{-i\pi\tau b^2}e^{-2\pi ib\sum_{j=1}^{N}z_{j}}.   
\label{eq:quasiperiodicinteger}
\end{equation}
The right-hand side defines a multiplier for a bundle over $E_{\tau}$, resembling that for $L^{N_{\phi}}$, but twisted by a flat bundle defined by the character $\chi_{\gamma} = (-1)^{(N-1)(a+b)}$. Defining the map $g: E_{\tau}^{\times N}\rightarrow E_{\tau}; ([z_{1}],...,[z_{N}])\mapsto \left[\sum_{j=1}^{N}z_{j}\right]:=[Z]$, we interpret $f$ as the pullback under $g$ of a section of $L$ over $E_{\tau}$ twisted by a flat line bundle. There exists a unique section, up to multiplication by a non-vanishing scalar, that satisfies the quasi-periodicity condition in Eq.\eqref{eq:quasiperiodicinteger}, given by
\begin{equation*}
    f(z_{1},\dots ,z_{N}) = \vartheta\begin{bmatrix}
\frac{N-1}{2}\\
\frac{N-1}{2}
\end{bmatrix}(Z,\tau). 
\end{equation*}
Thus, the wave function for the fully filled LLL of the IQHE is
\begin{equation}
    \Psi_{\mathrm{Integer}}(z_{1},...,z_{N}) = \vartheta\begin{bmatrix}
\frac{N-1}{2}\\
\frac{N-1}{2}
\end{bmatrix}(Z,\tau)\prod_{1\leq i,j\leq N}\theta_{11}(z_{i}-z_{j},\tau)e^{i\pi \tau N \sum_{j=1}^{N}y^2_j}. 
\label{eq:integerwavefunction}
\end{equation}

\subsection{Evolution of the fully filled LLL states under the   gCST}
\label{subsec:evolutionintegerquantumhalleffect}

 We now examine the evolution of IQHE wavefunctions under the gCST with the half-form correction, following from the evolution of the single-particle states described in Section \ref{sub:evolutionsingleparticle}. 
 
 We start by obtaining a trivializing section of the square root of the canonical line bundle $K^{1/2}$ associated with $L^{\boxtimes N}$. Since $L^{\boxtimes N}$ is the external $N$ fold tensor product of $L$, the trivializing sections of the square root of the canonical line bundle associated with $E_{\tau}^{N}$ can be directly obtained from those associated with $E_{\tau}$. In Section \ref{sub:evolutionsingleparticle}, we found $\sqrt{dz}$ is the trivializing section of $K^{1/2}$ for $E_{\tau}$ so that $\sqrt{dz_{1}\wedge...\wedge dz_{N}}$ is a trivializing section of $K^{1/2}$ for $E_{\tau}^{N}$. As with the single-particle case, the gCST consists of two factors: one that acts on the integer quantum Hall states in Eq.\eqref{eq:integerwavefunction} and another that acts on the sections of the square root of the canonical line bundle $K^{1/2}$. The gCST acts naturally on products of one-particle states, and thus its action on the half-form correction is straightforward:
\begin{equation}
    U_{2,s}(\sqrt{dz_{1}\wedge...\wedge dz_{N}}) = \sqrt{d{z_s}_{1}\wedge...\wedge d{z_s}_{N}} 
    \label{eq:evolutionhalfformIQHE},
\end{equation}
where the Lie derivative in $U_{2,s}$ acts on each coordinate independently.

To determine how  the gCST acts on the wavefunction of the integer quantum Hall states in Eq.~\eqref{eq:integerwavefunction}, we note that this wavefunction is not a product of single-particle states evaluated at each $z_{j}$, which complicates the evolution. The most intuitive approach is to first evolve $\prod_{1\leq i,j\leq N}\theta_{11}(z_{i}-z_{j},\tau)$ with a tensor product of gCST, acting on the coordinates $U_{ij}=z_{i}-z_{j}$. Each term in the product evolves as a one-particle function under the gCST, modifying the $z_{i}-z_{j}$ and $\tau$ coordinates as follows:
\begin{align}
 \label{alterationtaus}
 &\tau\rightarrow \tau_{s} = \tau + \frac{is}{2\pi N}\\
 & z_{i}-z_{j} \rightarrow (z_{i}-z_{j})_{s} = z_{i}-z_{j}+\frac{is}{2\pi N}(y_{i}-y_{j}). 
 \label{alterationzs}
\end{align}

For the center-of-mass pre-factor, we can treat $Z$ as the coordinate of a ``particle'' that evolves with the gCST similarly to a single-particle state. Thus, applying the gCST results in the following transformations:
\begin{align}
&\tau\rightarrow\tau_{s} = \tau + \frac{is}{2N\pi}\nonumber\\
&Z\rightarrow Z_{s} = Z+\frac{is}{2N\pi}Y,
\label{alterarionZ}
\end{align}
where $Y=\sum_{j=1}^{N}y_j$. This matches the gCST’s effect on the center-of-mass term in the IQHE wavefunction, altering $Z$ and to $\tau$ as described. The gCST has the following form:
\begin{equation*}
U^{\mathrm{CM}}_{1,s} = \left(e^{it\sum_{i=1}^{N}\frac{y_{i}^{2}}{2}}e^{\frac{t}{2\pi N}\sum_{i=1}^{N}y_{i}\frac{\partial}{\partial x_{i}}}e^{-\frac{it}{2N^{2}}\frac{1}{(2\pi)^2}\sum_{i=1}^{N}\frac{\partial^2}{\partial x_{i}^2}}\right)\bigg|_{t=is}.     
\end{equation*}
Thus,
\begin{align*}
U^{\mathrm{CM}}_{1,s}\left(\vartheta\begin{bmatrix}
\frac{N-1}{2}\\
\frac{N-1}{2}
\end{bmatrix}(Z,\tau)e^{i\pi \tau N\sum_{j=1}^{N}y^2_j}\right) 
&= \vartheta\begin{bmatrix}
\frac{N-1}{2}\\
\frac{N-1}{2}
\end{bmatrix}\left(Z+\frac{isY}{2\pi N},\tau+\frac{is}{2\pi N}\right)e^{i\pi \tau_{s} N\sum_{j=1}^{N}(y_{j,s})^2}.
\end{align*}
In summary, evolving the multiparticle wave function given by Eq.~\eqref{eq:integerwavefunction} under the gCST simplifies to:
\begin{align}
&U_{1,s}\left(\vartheta\begin{bmatrix}
\frac{N-1}{2}\\
\frac{N-1}{2}
\end{bmatrix}(Z,\tau)\prod_{1\leq i,j\leq N}\theta_{11}(z_{i}-z_{j},\tau)e^{i\pi\tau N_{\phi}\sum_{j=1}^{N}y_{j}^{2}}\right) =  \nonumber\\
&=\vartheta\begin{bmatrix}
\frac{N-1}{2}\\
\frac{N-1}{2}
\end{bmatrix}(Z_{s},\tau_{s})\prod_{1\leq i,j\leq N}\theta_{11}((z_{i}-z_{j})_{s},\tau_s)e^{i\pi\tau_{s} N\sum_{j=1}^{N}(y_{j})_{s}^{2}}.   
\label{eq:evolutionintegerwihouthalfform}
\end{align}

Combining the results from the IQHE evolution with the gCST in Eq.\eqref{eq:evolutionintegerwihouthalfform} and the half-form evolution in Eq.~\eqref{eq:evolutionhalfformIQHE}, we have:
\begin{align*}
&U_{s}\left(\vartheta\begin{bmatrix}
\frac{N-1}{2}\\
\frac{N-1}{2}
\end{bmatrix}(Z,\tau)\prod_{1\leq i,j\leq N}\theta_{11}(z_{i}-z_{j},\tau)e^{i\pi\tau N_{\phi}\sum_{j=1}^{N}y_{j}^{2}}\otimes \sqrt{\bigwedge_{i=1}^{N} dz_{i}}\right) =\\
&=\vartheta\begin{bmatrix}
\frac{N-1}{2}\\
\frac{N-1}{2}
\end{bmatrix}(Z_{s},\tau_{s})\prod_{1\leq i,j\leq N}\theta_{11}((z_{i}-z_{j})_{s},\tau_s)e^{i\pi\tau N_{\phi}\sum_{j=1}^{N}(y_{j})_{s}^{2}} \otimes \sqrt{\bigwedge_{i=1}^{N}d{z_s}_{i}}.     
\end{align*}

This result aligns with previous findings in the literature on the IQHE on a torus for different modular parameter $\tau$. Therefore, we obtain that, starting with a given flat geometry on the torus $X$, given by some initial choice of modular parameter $\tau$, we do obtain the correct states for the IQHE, for any other value of the modular parameter, by using the geometric quantization inspired evolution along the deformation of geometry given by the gCST. 

The limit $s\rightarrow \infty$ is also insightful. Starting with the evolved section of $L^{\boxtimes N}\otimes K^{1/2}$ and considering $\tau = 0$, we observe that as $s\rightarrow \infty$, the evolution of the wavefunctions with the half-form correction becomes:
\begin{align*}
&\lim_{s\rightarrow \infty} U_{s}\left(\vartheta\begin{bmatrix}
\frac{N-1}{2}\\
\frac{N-1}{2}
\end{bmatrix}(Z,\tau)\prod_{1\leq i,j\leq N}\theta_{11}(z_{i}-z_{j},\tau)e^{i\pi\tau N_{\phi}\sum_{j=1}^{N}y_{j}^{2}}\otimes \sqrt{\bigwedge_{i=1}^{N} dz_{i}}\right) \\
&=\sum_{n\in\mathbb{Z}}\sqrt{\frac{1}{2^{N}N}}\delta\left(Y+n+\frac{N-1}{2}\right)e^{2\pi i\left(n+\frac{N-1}{2}\right)\left(X+\frac{N-1}{2}\right)}\sqrt{dY} \\
&\times \prod_{1\leq i,j\leq N}\sum_{n\in\mathbb{Z}}\sqrt{\frac{1}{2^{N}N}}\delta\left(y_{i}-y_{j}+n+\frac{1}{2}\right)e^{2\pi i\left(n+\frac{1}{2}\right)\left(x_{i}-x_{j}+\frac{1}{2}\right)}\otimes \sqrt{d(y_{i}-y_{j})}. 
\end{align*}
Though more complex than the single-particle case, this result still exhibits Dirac delta distributions, centered at $n+\frac{N-1}{2}$ for the center of mass and $n+\frac{1}{2}$ for each single-particle function $\theta_{11}((z_{i}-z_{j})_{s},\tau_s)$, with $n\in\mathbb{Z}$. 

To better understand the density profile of the integer quantum Hall effect (IQHE), we begin by computing the normalization of the lowest Landau level (LLL) wavefunctions for the IQHE, specifically normalizing Eq.\eqref{eq:integerwavefunction}. With these normalized wavefunctions, we can then calculate the particle density using Eq.\eqref{densitysecondquantization}. Next, we explore the limit $s\rightarrow \infty$ in the density plots. By using a purely imaginary $\tau$ and varying it, we achieve a deformation equivalent to changing $s$, since $\tau_{2_s}=\frac{s}{2\pi N}$. Thus, in the plots below, we vary $\tau_{2_s}$ to observe this effect. From a numerical perspective, an efficient way to examine the particle density in the IQHE is to sum the densities of the normalized one-particle states:
\begin{equation}
\rho_{\mathrm{Integer}}(z) = \sqrt{2N\tau_{2_s}}\sum_{l=0}^{N-1}\Big|\theta_{l/N}(Nz_{s},N\tau_{s})e^{i\pi N\tau_{s} y^2}\Big|^2. 
\label{densidadeinteger}
\end{equation}
The goal of this section is to investigate how the density profile varies with changes in the number of particles and in $\tau_{s}$. First, we examine the effect of changing the particle number. In Fig.\ref{fig:changeNIQHE}, we fix the torus’s modular parameter $\tau_{s}=i$ and observe how the density plots evolve as the number of particles increases.

\begin{figure}[htbp]
  \centering

  \begin{minipage}[b]{0.32\linewidth}
    \centering
    \includegraphics[width=0.9\linewidth]{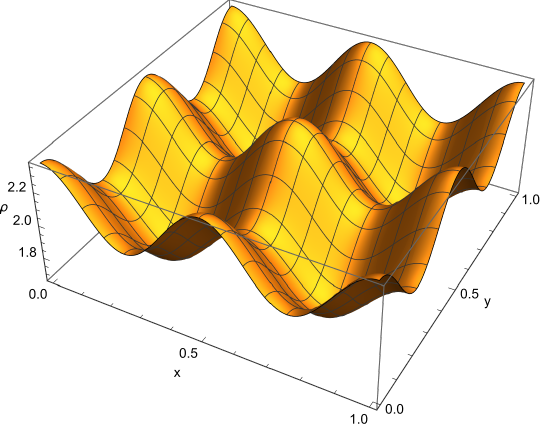}\\[-0.5ex]
    {\small (a) $N=2$}
  \end{minipage}\hfill
  \begin{minipage}[b]{0.32\linewidth}
    \centering
    \includegraphics[width=0.9\linewidth]{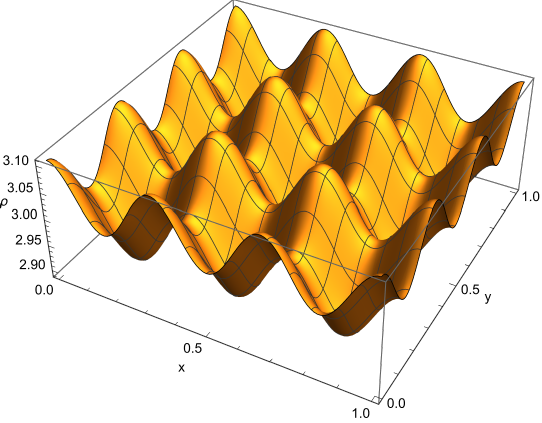}\\[-0.5ex]
    {\small (b) $N=3$}
  \end{minipage}\hfill
  \begin{minipage}[b]{0.32\linewidth}
    \centering
    \includegraphics[width=0.9\linewidth]{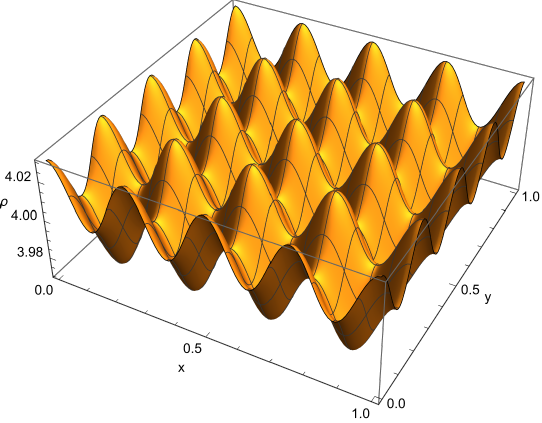}\\[-0.5ex]
    {\small (c) $N=4$}
  \end{minipage}\\[0.5cm]

  \begin{minipage}[b]{0.32\linewidth}
    \centering
    \includegraphics[width=0.9\linewidth]{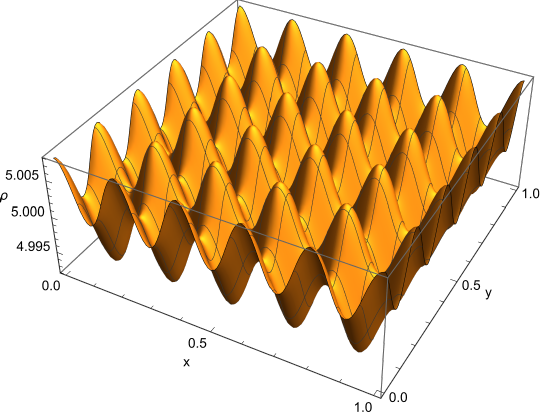}\\[-0.5ex]
    {\small (d) $N=5$}
  \end{minipage}\hfill
  \begin{minipage}[b]{0.32\linewidth}
    \centering
    \includegraphics[width=0.9\linewidth]{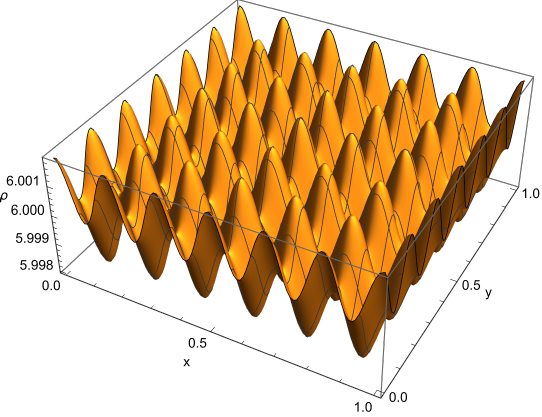}\\[-0.5ex]
    {\small (e) $N=6$}
  \end{minipage}\hfill
  \begin{minipage}[b]{0.32\linewidth}
    \centering
    \includegraphics[width=0.9\linewidth]{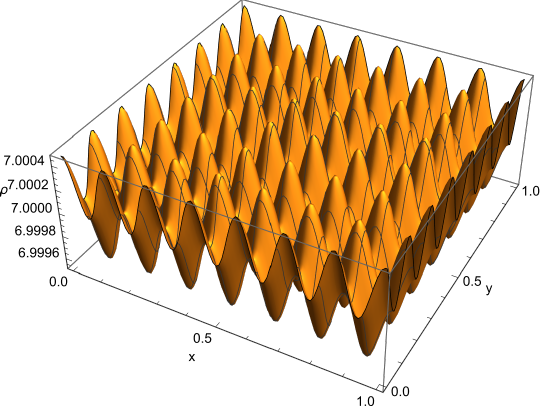}\\[-0.5ex]
    {\small (f) $N=7$}
  \end{minipage}\\[0.5cm]

  \begin{minipage}[b]{0.32\linewidth}
    \centering
    \includegraphics[width=0.9\linewidth]{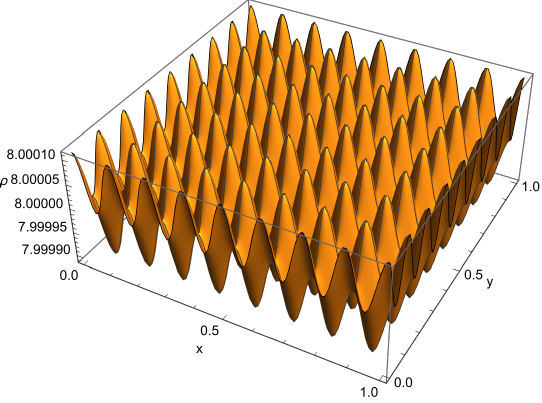}\\[-0.5ex]
    {\small (g) $N=8$}
  \end{minipage}\hfill
  \begin{minipage}[b]{0.32\linewidth}
    \centering
    \includegraphics[width=0.9\linewidth]{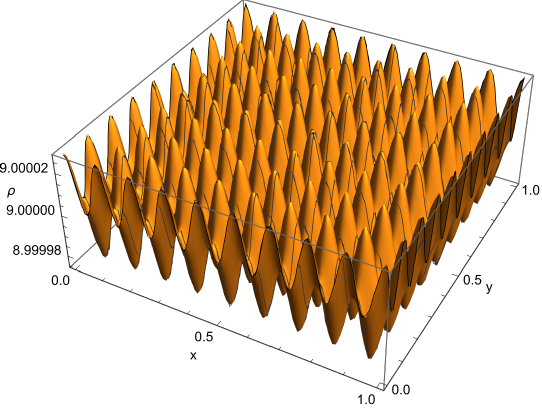}\\[-0.5ex]
    {\small (h) $N=9$}
  \end{minipage}\hfill
  \begin{minipage}[b]{0.32\linewidth}
    \centering
    \includegraphics[width=0.9\linewidth]{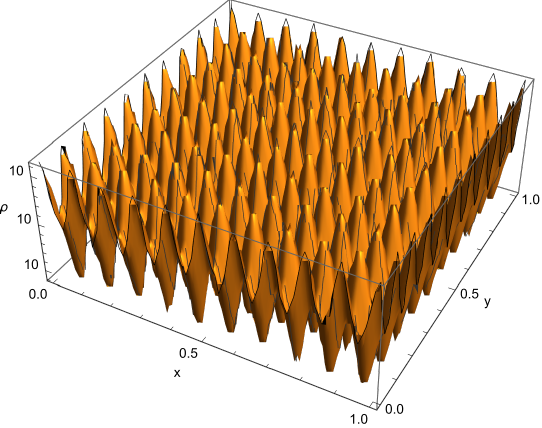}\\[-0.5ex]
    {\small (i) $N=10$}
  \end{minipage}

  \caption{Varying the number of particles while keeping \(\tau = i\).}
  \label{fig:changeNIQHE}
\end{figure}

As seen in Fig.\ref{fig:changeNIQHE}, the number of spikes in the density profile increases with the number of particles. With more particles, the density profile gradually approaches uniformity, a behavior consistent with the findings in \cite{LargeLimitKlevtsov}.

Next, we vary $\tau_{s}$ while keeping the particle number fixed at $N=2$. The resulting density profiles are shown in Fig.\ref{fig:changetauIQHE}.
\begin{figure}[htbp]
  \centering

  \begin{minipage}[b]{0.32\linewidth}
    \centering
    \includegraphics[width=0.8\linewidth]{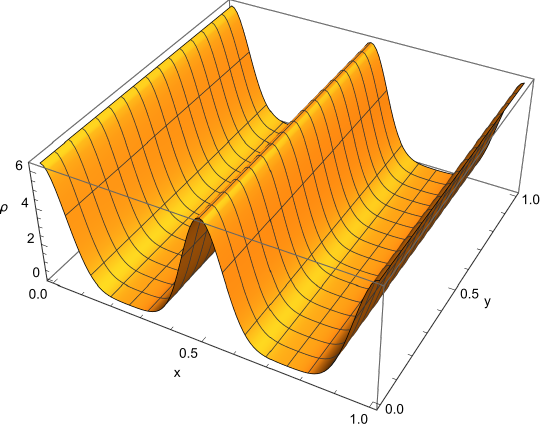}\\[-0.5ex]
    {\small (a) \(\tau_{s}=0.1\,i\)}
  \end{minipage}\hfill
  \begin{minipage}[b]{0.32\linewidth}
    \centering
    \includegraphics[width=0.8\linewidth]{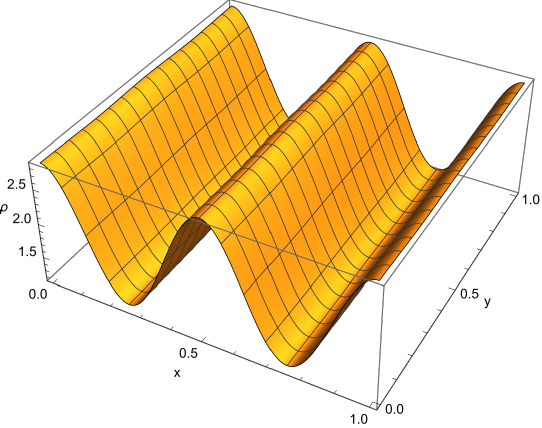}\\[-0.5ex]
    {\small (b) \(\tau_{s}=0.5\,i\)}
  \end{minipage}\hfill
  \begin{minipage}[b]{0.32\linewidth}
    \centering
    \includegraphics[width=0.8\linewidth]{Figures/IQHE-tau=I-Ne=2.pdf}\\[-0.5ex]
    {\small (c) \(\tau_{s}=1\,i\)}
  \end{minipage}\\[0.5cm]

  \begin{minipage}[b]{0.32\linewidth}
    \centering
    \includegraphics[width=0.8\linewidth]{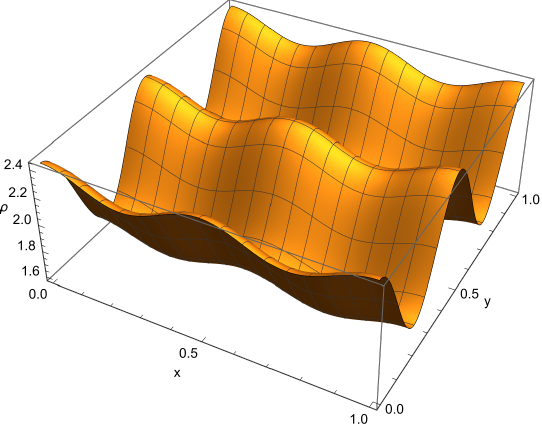}\\[-0.5ex]
    {\small (d) \(\tau_{s}=1.3\,i\)}
  \end{minipage}\hfill
  \begin{minipage}[b]{0.32\linewidth}
    \centering
    \includegraphics[width=0.8\linewidth]{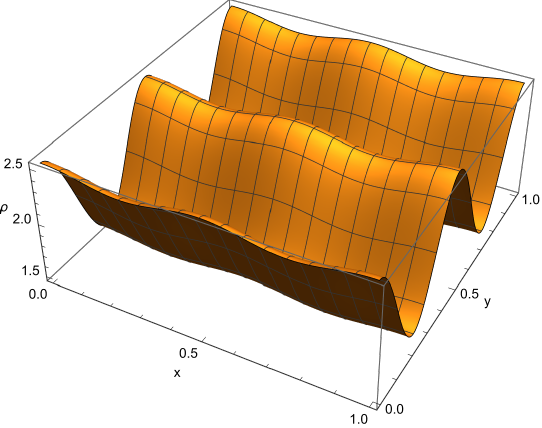}\\[-0.5ex]
    {\small (e) \(\tau_{s}=1.5\,i\)}
  \end{minipage}\hfill
  \begin{minipage}[b]{0.32\linewidth}
    \centering
    \includegraphics[width=0.8\linewidth]{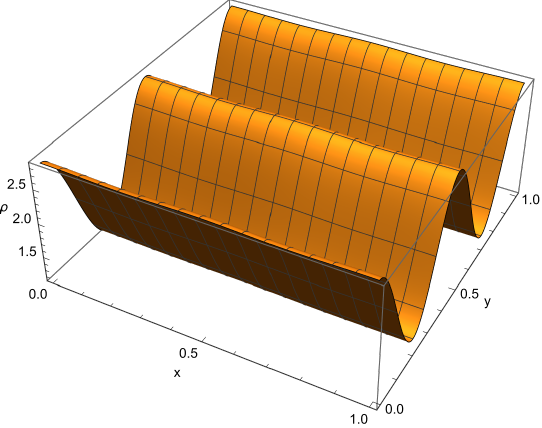}\\[-0.5ex]
    {\small (f) \(\tau_{s}=2\,i\)}
  \end{minipage}\\[0.5cm]

  \begin{minipage}[b]{0.32\linewidth}
    \centering
    \includegraphics[width=0.8\linewidth]{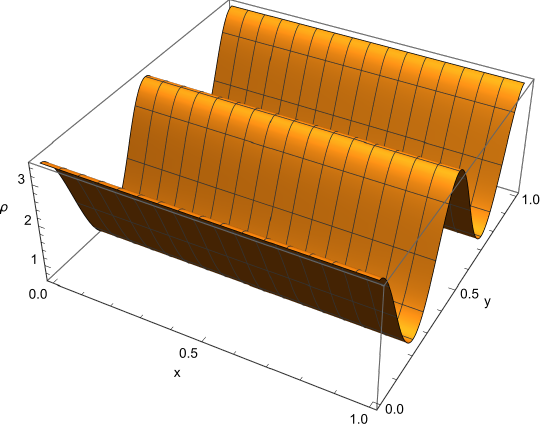}\\[-0.5ex]
    {\small (g) \(\tau_{s}=3\,i\)}
  \end{minipage}\hfill
  \begin{minipage}[b]{0.32\linewidth}
    \centering
    \includegraphics[width=0.8\linewidth]{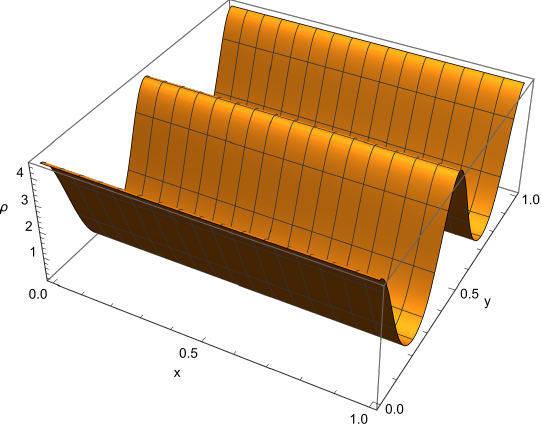}\\[-0.5ex]
    {\small (h) \(\tau_{s}=5\,i\)}
  \end{minipage}\hfill
  \begin{minipage}[b]{0.32\linewidth}
    \centering
    \includegraphics[width=0.8\linewidth]{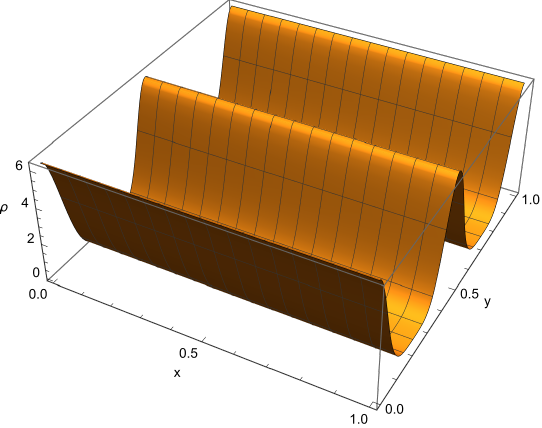}\\[-0.5ex]
    {\small (i) \(\tau_{s}=10\,i\)}
  \end{minipage}

  \caption{Varying \(\tau_{s}\) while keeping \(N=2\).}
  \label{fig:changetauIQHE}
\end{figure}

Several key observations arise from this evolution. First, the density profile for $\tau_{s}=i$ displays a unique symmetry absent in other $\tau_{s}$ values. This symmetry corresponds to $\tau=i$ being a fixed point of the modular transformation group, represented by an $\mathrm{SL}(2,\mathbb{Z})$ matrix.

The behavior of the density maxima aligns with the analytic expression for the density of one-particle states in Eq.\eqref{integrateddensitysingleparticle} and the relationship between IQHE particle density and one-particle state density in Eq.\eqref{densidadeinteger}. Specifically, for very small $|\tau_{s}|$, the oscillatory component along the $x$-direction is more prominent, placing the density maxima along the $x$-axis. As $|\tau_{s}|$ increases, the $y$-dependent oscillatory component becomes more influential, reducing the prominence of $x$-direction oscillations. This is supported by the fact that, analytically, for the one particle state the density of particles converged to Dirac delta distributions on the points $y=n+\frac{l}{N}$ with $n\in\mathbb{Z}$. While $x$-direction oscillations remain noticeable for $\tau_{s}=1.3 i$ and $\tau_{s}=1.5 i$, they appear less pronounced graphically for larger $|\tau_{s}|$. However, computational checks confirm their continued presence. 

As explained above, from the geometric quantization point of view, the limit $s\to\infty$ corresponds to a degeneration of the K\"ahler polarizations of $X$ to a real polarization whose polarized wave functions are distributional and supported on an appropriate collection of Bohr-Sommerfeld cycles.

\subsection{Fractional Quantum Hall Effect}
\label{section:fractionalquantumhalleffect}

Finally, we construct a Laughlin state with $N_{e}$ electrons and a filling fraction $\nu=\frac{N_{e}}{N_{\phi}}=\frac{1}{k}$, where $k$ is an odd number. This state is an element of $\Lambda^{N_{e}} H^{0}(E_{\tau}, L^{kN_{e}})$, that is, $\Psi_{\mathrm{Laughlin}} \in \Lambda^{N_{e}} H^{0}(E_{\tau}, L^{kN_{e}})$. If we require $\Psi_{\mathrm{Laughlin}} \in \Lambda^{N_{e}} H^{0}(E_{\tau}, L^{kN_{e}})$, together with the same vanishing order when the coordinates coallesce as in the plane, we are lead to the following form for the analog of Laughlin's wavefunctions on $E_{\tau}$~\cite{Haldane-Rezayi,LargeLimitKlevtsov}:
\begin{equation*}
    \Psi_{\mathrm{Laughlin}} = f(z_{1},\dots,z_{N})\prod_{1\leq i,j\leq N_{e}}(\theta_{11}(z_{i}-z_{j},\tau))^{k}e^{i\pi \tau k N_{e}\sum_{j=1}^{N_e}y^2_j}. 
\end{equation*}

As in the IQHE, next we ensure that $\Psi_{\mathrm{Laughlin}}$ transforms correctly as an element of $\Lambda^{N_{e}}H^{0}(E_{\tau},L^{kN_{e}})$, requiring it to transform, under lattice translations of each particle separately, with the multipliers $\widetilde{e}^{kN_{e}}_{\gamma}(z)$ that determine the line bundle $L^{kN_e}$. Following the reasoning in Section \ref{section:integerquantumhalleffect}, we conclude that $f$ behaves as the pullback under $g$ of a holomorphic section of $L^{k}\to E_{\tau}$ twisted by a flat line bundle, much like what was found in the case of the IQHE. In fact, the function $f$ satisfies the same quasiperiodicity relations established in Section \ref{sub:theta_functions} for theta functions with characteristics. Consequently, there are exactly $k$ independent holomorphic functions given by
\begin{equation*}
f(z_{1},...,z_{N_e}) = \vartheta\begin{bmatrix}
\frac{N_e-1}{2k}+\frac{l}{k}\\
\frac{N_e-1}{2}
\end{bmatrix}(kZ,k\tau),
\end{equation*}
with $l=0,...,k-1$. 
Physically, this indicates that the ground state of the system---presumably adiabatically connected to the present Laughlin state---is $k^g$-fold degenerate on a surface of genus $g$~\cite{wen:niu:1990}, where in our case $g=1$. This degeneracy is, by definition, a manifestation of topological order. Additionally, the ground state degeneracy is related to the fact that, at filling fraction $1/k$, the system's relevant excitations are quasiparticles with a charge of $1/k$ of an electron. Moving $k$ such quasiparticles to the same spatial point would yield a full electron, reflecting that these quasiparticles do not obey Fermi-Dirac statistics.


We conclude that the Laughlin wavefunctions exhibit $k$-fold degeneracy, with a basis given by
\begin{equation}
    \Psi_{\mathrm{Laughlin}}(z_{1},...,z_{N_e})=\vartheta\begin{bmatrix}
\frac{N_e-1}{2k}+\frac{l}{k}\\
\frac{N_e-1}{2}
\end{bmatrix}(kZ,k\tau)\prod_{1\leq i,j\leq N_{e}}(\theta_{11}(z_{i}-z_{j},\tau))^{k}e^{i\pi \tau k N_{e}\sum_{j=1}^{N_e}y^2_j},
\label{eq:Laughlinfractional}
\end{equation}
with $l=0,...,k-1$.

\subsection{Evolution of Laughlin states under the gCST and the Tao-Thouless state}
\label{subsec:EvolutionFQHE}

Given the similarities between the wavefunction of the LLL in Eq.~\eqref{eq:Laughlinfractional} and the IQHE wavefunction in Eq.~\eqref{eq:integerwavefunction}, the evolution of the LLL wavefunction can be directly inferred from that of the IQHE. Thus, we will simply present the final result here and refer the reader to Section \ref{subsec:evolutionintegerquantumhalleffect} for further details.

The trivializing section of the square root of the canonical line bundle $K^{1/2}$ in this case is $\sqrt{dz_{1}\wedge...\wedge dz_{N_e}}$. As explained in previous sections, the gCST can be decomposed into the tensor product of two components: $U_{1,s}$ and  $U_{2,s}$, where the later acts on the half-form bundle. In Section \ref{subsec:evolutionintegerquantumhalleffect}, we found that the action of $U_{2,s}$ on the half-form is just given by analytic continuation 
\begin{equation}
    U_{2,s}(\sqrt{dz_{1}\wedge...\wedge dz_{N_e}}) = \sqrt{dz_{s_1}\wedge...\wedge dz_{s_Ne}}. 
    \label{evolutionhalfformFQHE}
\end{equation}
Additionally, as argued in Sec.\ref{subsec:evolutionintegerquantumhalleffect}, $U_{1,s}$ acts on the LLL wavefunction in Eq.~\eqref{eq:Laughlinfractional} by transforming the coordinates $z_i-z_j$, $\tau$ and $Z$ according to:
\begin{align*}
&\tau\rightarrow \tau_{s} = \tau + \frac{is}{2\pi kN_{e}},\\
&z_{i}-z_{j}\rightarrow (z_{i}-z_{j})_{s} = z_{i}-z_{j}+\frac{is}{2\pi kN_{e}}(y_{i}-y_{j}),
&Z\rightarrow Z_{s} = Z+\frac{isY}{2\pi kN_{e}}. 
\end{align*}

Thus, the evolution is given by:
\begin{align}
&U_{1,s}\left(\vartheta\begin{bmatrix}
\frac{N_e-1}{2k}+\frac{l}{k}\\
\frac{N_e-1}{2}
\end{bmatrix}(kZ,k\tau)\prod_{1\leq i,j\leq N_e}\left(\theta_{11}(z_{i}-z_{j},\tau)\right)^ke^{i\pi\tau k N_{e}\sum_{j=1}^{N_e}y_{j}^{2}}\right) =  \nonumber\\
&=\vartheta\begin{bmatrix}
\frac{N_e-1}{2k}+\frac{l}{k}\\
\frac{N_e-1}{2}
\end{bmatrix}(kZ_s,k\tau_s)\prod_{1\leq i,j\leq N_e}\left(\theta_{11}((z_{i}-z_{j})_{s},\tau_s)\right)^ke^{i\pi\tau_{s} k N_e\sum_{j=1}^{N_e}(y_{j})_{s}^{2}}.   
\label{eq:evolutionFQHEwihouthalfform}
\end{align}

Combining the gCST evolution for the FQHE wavefunction in Eq.\eqref{eq:evolutionFQHEwihouthalfform} with the half-form evolution in Eq.~\eqref{evolutionhalfformFQHE}, we obtain:
\begin{align*}
&U_{s}\left(\vartheta\begin{bmatrix}
\frac{N_e-1}{2k}+\frac{l}{k}\\
\frac{N_e-1}{2}
\end{bmatrix}(kZ,k\tau)\prod_{1\leq i,j\leq N_e}\left(\theta_{11}(z_{i}-z_{j},\tau)\right)^k e^{i\pi\tau k N_{e}\sum_{j=1}^{N_e}y_{j}^{2}}\otimes \sqrt{\bigwedge_{i=1}^{N_e} dz_{i}}\right) =\\
&=\vartheta\begin{bmatrix}
\frac{N_e-1}{2k}+\frac{l}{k}\\
\frac{N_e-1}{2}
\end{bmatrix}(kZ_s,k\tau_s)\prod_{1\leq i,j\leq N_e}\left(\theta_{11}((z_{i}-z_{j})_{s},\tau_s)\right)^k e^{i\pi\tau k N_{e}\sum_{j=1}^{N_e}(y_{j})_{s}^{2}} \otimes \sqrt{\bigwedge_{i=1}^{N_e}d{z_s}_{i}}.     
\end{align*}
This result aligns with previous findings in the literature on the Laughlin states on a torus for different modular parameter $\tau$. Thus, also for the Laughlin states of the FQHE on the torus, the correct dependence on the modular parameter is successfully described by the natural evolution of states along deformations of K\"ahler geometry which is given by the gCST.

The limit $s\rightarrow \infty$ is similar to that obtained for the IQHE in Section \ref{subsec:evolutionintegerquantumhalleffect}, with the difference that the locations of the Dirac delta distributions shift due to the additional $l/k$ factor in the fractional quantum Hall states in Eq.~\eqref{eq:Laughlinfractional}.
\begin{align*}
&\lim_{s\rightarrow \infty} U_{s}\left(\vartheta\begin{bmatrix}
\frac{N_e-1}{2k}+\frac{l}{k}\\
\frac{N_e-1}{2}
\end{bmatrix}(kZ,k\tau)\prod_{1\leq i,j\leq N_e}(\theta_{11}(z_{i}-z_{j},\tau))^{k}e^{i\pi\tau k N_{e}\sum_{j=1}^{N_e}y_{j}^{2}}\otimes \sqrt{\bigwedge_{i=1}^{N_e} dz_{i}}\right) \\
&=\sum_{n\in\mathbb{Z}}\sqrt{\frac{1}{2^{N_e}N_{e}k}}\delta\left(Y+n+\frac{N_{e}-1}{2k}+\frac{l}{k}\right)e^{2\pi i\left(n+\frac{N_{e}-1}{2k}+\frac{l}{k}\right)\left(X+\frac{N-1}{2}\right)}\otimes \sqrt{dY}.\\
&\times \prod_{1\leq i,j\leq N_{e}}\left(\sum_{n\in\mathbb{Z}}\sqrt{\frac{1}{2^{N_e}N_{e}k}}\delta\left(y_{i}-y_{j}+n+\frac{1}{2}\right)e^{2\pi ik\left(n+\frac{1}{2}\right)\left(x_{i}-x_{j}+\frac{1}{2}\right)}\right)^k \otimes \sqrt{d(y_{i}-y_{j})}. 
\end{align*}

We now examine how the density profile of Laughlin states evolves as the torus geometry is deformed, i.e., as $s\rightarrow\infty$. To achieve this, we compute the particle density using Eq.\eqref{densitysecondquantization}. Instead of directly changing $s$, we vary the purely imaginary $\tau_{s}$, which produces equivalent results at the density level since $\tau_{2,s} = s/(2\pi k N_e)$. For this analysis, we focus on the FQHE at filling fraction $\nu=1/3$. Numerically, we explore two scenarios: keeping $\tau_{s}$ fixed while varying the particle number, and keeping the particle number fixed while varying $\tau_{2,s}$.

Starting with $\tau_{s}=i$ fixed, we vary the number of particles. As the normalization integral and particle density calculation become challenging for large $N_e$, we compute these only for $N_e=2$ and $N_e=3$. The results are shown in Fig.\ref{fig:changeNFQHE}. Next, we vary $\tau_{s}$ while keeping $N_e=2$ constant, with the results displayed in Fig.\ref{fig:changetauFQHE}. 

\begin{figure}[htbp]
  \centering
  \begin{minipage}[b]{0.48\linewidth}
    \centering
    \includegraphics[width=0.6\linewidth]{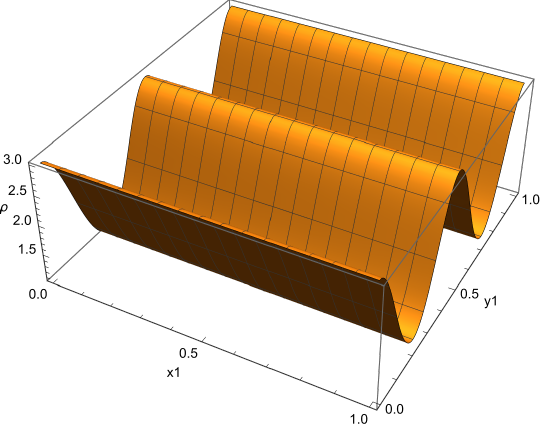}\\[-0.5ex]
    {\small (a) \(N_{e}=2\)}
  \end{minipage}\hfill
  \begin{minipage}[b]{0.48\linewidth}
    \centering
    \includegraphics[width=0.6\linewidth]{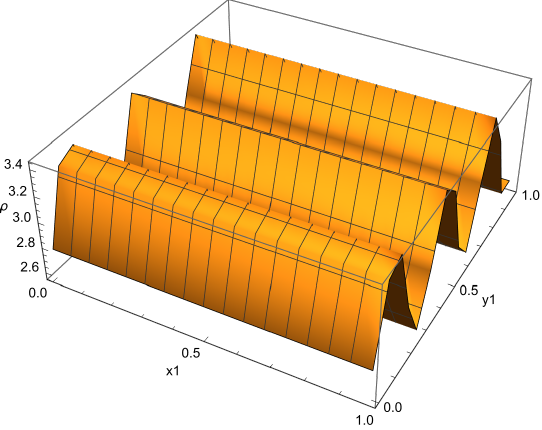}\\[-0.5ex]
    {\small (b) \(N_{e}=3\)}
  \end{minipage}

  \caption{Varying the number of particles of the system while keeping \(\tau = i\).}
  \label{fig:changeNFQHE}
\end{figure}

\begin{figure}[htbp]
  \centering

  \begin{minipage}[b]{0.32\linewidth}
    \centering
    \includegraphics[width=0.8\linewidth]{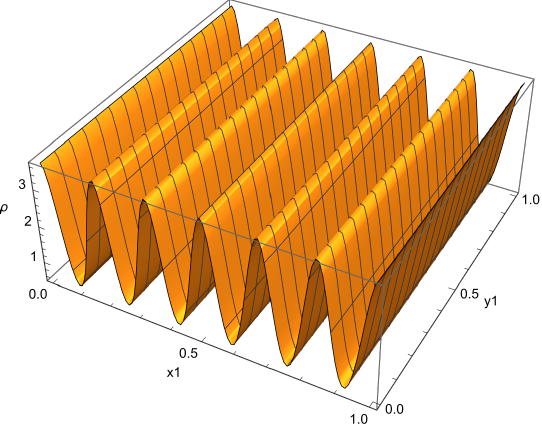}\\[-0.5ex]
    {\small (a) \(\tau_{s}=0.1\,i\)}
  \end{minipage}\hfill
  \begin{minipage}[b]{0.32\linewidth}
    \centering
    \includegraphics[width=0.8\linewidth]{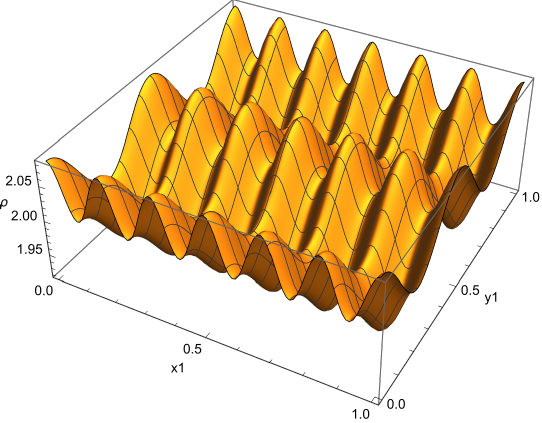}\\[-0.5ex]
    {\small (b) \(\tau_{s}=0.5\,i\)}
  \end{minipage}\hfill
  \begin{minipage}[b]{0.32\linewidth}
    \centering
    \includegraphics[width=0.8\linewidth]{Figures/FQHE-1.3-Ne=2-tau=1.pdf}\\[-0.5ex]
    {\small (c) \(\tau_{s}=1\,i\)}
  \end{minipage}\\[0.5cm]

  \hspace*{0.18\linewidth}%
  \begin{minipage}[b]{0.32\linewidth}
    \centering
    \includegraphics[width=0.8\linewidth]{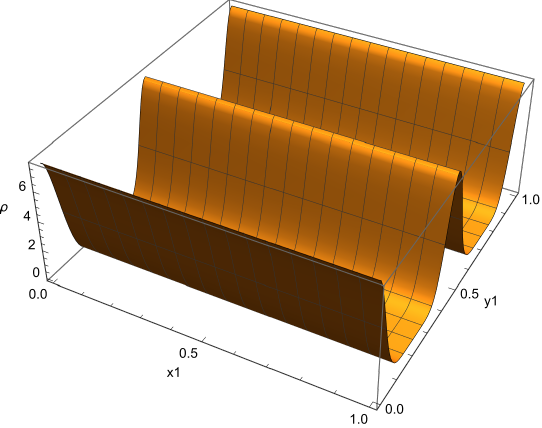}\\[-0.5ex]
    {\small (d) \(\tau_{s}=5\,i\)}
  \end{minipage}\hfill
  \begin{minipage}[b]{0.32\linewidth}
    \centering
    \includegraphics[width=0.8\linewidth]{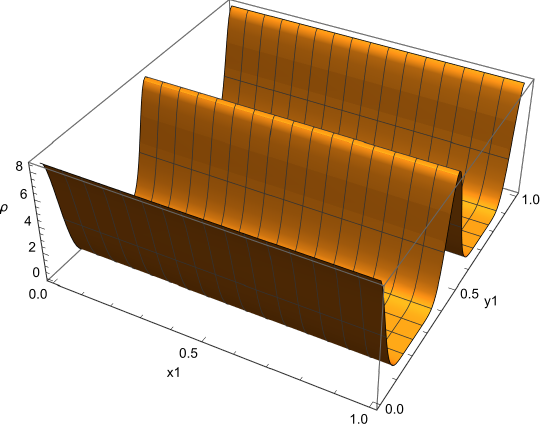}\\[-0.5ex]
    {\small (e) \(\tau_{s}=6\,i\)}
  \end{minipage}

  \caption{Varying \(\tau_{s}\) while keeping \(N_{e}=2\).}
  \label{fig:changetauFQHE}
\end{figure}

In conclusion, the density profile results for the FQHE closely resemble those for the IQHE, as analyzed in Section \ref{subsec:evolutionintegerquantumhalleffect}. The primary difference is the symmetry observed at $\tau_{s}=i$ for the IQHE, which is absent in the FQHE.

The gCST describes, as $s\to\infty$,  an adiabatic deformation of the Laughlin state on the torus along the deformation of the geometry. In this limit,  the geometry corresponds to a thin torus, given by a product of a circle of radius $\to 0$ and a circle of radius $\to \infty$ such that the symplectic area remains fixed through the deformation. The FQHE for this limit geometry is described by the Tao-Thouless state \cite{PhysRevB.77.155308, hansson2009taothoulessrevisited} where electrons localize on specific one-dimensional cycles on the torus which separated by integral symplectic areas. These cycles correspond to the Bohr-Sommerfeld loci where the one-particle states localize in this singular geometric limit and, in terms of geometric quantization, correspond to the support of the distributional polarized sections for the limit real polarization of the torus. We note also that in \cite{Zhou_Nussinov_Seidel} the heat equation satisfied by theta functions was used to study the deformation of the Laughlin states on the torus for varying modular parameter. The heat kernel is, of course, one of the essential ingredients of the quantum operator in the gCST 
(see (\ref{eq:GCST1})). Together with the results in the next section, which addresses the QHE for 
flat geometries on the plane, these results indicate that indeed the geometric quantization prescription for evolution of quantum states under deformation of geometry---the gCST---describes correctly the effect of such deformations on the quantum wave functions that are relevant for the QHE. For the case of the application of the gCST to the FQHE on the sphere see \cite{matos:mera:mourao:mourao:nunes:2023}.


\section{Non-Flat Geometry and Periodic Hamiltonian}
\label{sec:non_flat_geometry}

In this section, we apply the gCST framework to the quantum Hall effect in non-flat geometries of the 2-torus. Previously, the focus was on flat geometries with zero Gauss curvature, corresponding  to the usual model of the torus obtained through a quotient of the  Euclidean plane by a rank 2 lattice defined by the modular parameter $\tau$. Introducing a periodic Hamiltonian generates nonzero Gauss curvature, causing the torus geometry to deviate locally from the Euclidean plane geometry. We find explicitly the  geometric deformations and their impact on the wavefunctions of the quantum Hall system.

The starting point of a flat K\"ahler
structure on $X=E_\tau$ and the quantum states are the same 
as in section \ref{sec:flat_geometry},
but instead of deforming 
with the imaginary time Hamiltonian flow
of $H=\frac{y^2}{2}$ we use a global  real analytic function, H, on $E_\tau$.
The gCST framework is adapted to include the periodic Hamiltonian, with the 
prequantum and quantum operators changing accordingly.

In Section \ref{section:oneparticlestatesnonflat}, we introduce the new pre-quantum and quantum operators and apply the gCST to the single-particle wavefunctions from Section \ref{section:oneparticlestates}. This approach is then extended to multiparticle states. Section \ref{section:integerquantumhalleffectnonflat} focuses on evolving the integer quantum Hall effect wavefunctions obtained in Section \ref{section:integerquantumhalleffect} under the revised gCST. Finally, Section \ref{section:fractionalquantumhalleffectnonflat} describes the evolution of fractional quantum Hall effect wavefunctions, derived in Section \ref{section:fractionalquantumhalleffect}. (Throughout this section, for simplicity, we do not include the half-form correction.)

\subsection{One-Particle States and evolution under the gCST}
\label{section:oneparticlestatesnonflat}
We recall from Section \ref{section:oneparticlestates} that the symplectic form on the complex torus $E_{\tau}$ can be expressed in coordinates $(x,y)$  as $\omega = 2\pi N_{\phi}dx\wedge dy$.  Since we are working in the unitary gauge, the connection 1-form is given by $A = 2\pi i N_{\phi}y dx$. The Chern connection is given by
\begin{equation}
\nabla_{X}s= X\cdot s+A(X)\cdot s,
\end{equation}
where $X\in \mathfrak{X}(E_{\tau})$ and $s$ denotes sections of the single-particle line bundle. We use the same polarized sections constructed for the unitary line bundle in Sect.~\ref{section:oneparticlestates}. As previously discussed, these correspond to the wavefunctions of the LLL, 
and to the Hilbert space $\mathcal{H}_{P_0}$ of the initial
polarization ${P}_0$
with basis
\begin{equation} 
\label{e-psilll}
\psi^{(l)}_{\mathrm{LLL}}(z,\tau) = \theta_{l/N_{\phi}}(N_{\phi}z, N_{\phi}\tau) e^{i\pi N_{\phi}\tau y^2} \,  , \, l=0, \dots , N_{\phi} -1 . 
\end{equation} 
For simplicity, the half-form correction of the wavefunctions is omitted in this section.


In this section, we illustrate our method by considering the following representative  periodic Hamiltonian
\begin{equation}
    H(y) = \sin^2{(2\pi y)} \, ,
\end{equation}
and the one dimensional family of K\"ahler metrics corresponding to the geodesics with initial  K\"ahler potential of the flat metric, $k_0$, and initial velocity $H$,
i.e.
\begin{equation}
\label{e-IC}
    \left\{
    \begin{array}{rcl}
       k(0,y)  & =  & \pi N_\phi \tau_2 \, \frac{y^2}{2}\\
        \frac{\partial k_0}{\partial s}(0,y)  & = &  H(y) = \sin^2{(2\pi y)} \, . 
    \end{array}
    \right.
\end{equation}

We aim to study the evolution of the wavefunctions (\ref{e-psilll}) under the gCST corresponding to the non-flat geometries 
generated by $H$. As in (\ref{timescst}) the gCST  is given by: \begin{equation} 
U_{s} = e^{s \, Q_{\mathrm{pre}}(H) - s \, Q(H)}. 
\end{equation} 
 The  Hamiltonian vector field $X_{H}$ reads
 $$
 X_{H} = \frac{\sin(4\pi y)}{N_{\phi}}\frac{\partial}{\partial x}
 $$
  and thus the prequantum operator  is:
\begin{equation}
   Q_{\mathrm{pre}}(H) = i\nabla_{X_{H}} + H =  +iX_{H} + iA(X_{H}) + H = iX_{H}- 2\pi N_{\phi}y\frac{\sin(4\pi y)}{N_{\phi}} + H \, . 
\end{equation}

To define the quantum operator 
$$
Q(H)= Q(\sin^2(2\pi y)) \, ,
$$
corresponding to the Hamiltonian $H$, 
we have the difficulty that no Hamiltonian vector field of a, nonconstant,
periodic function preserves the K\"ahler
polarization. In the case of $H$ a quadratic
polynomial in the variables $x, y$, as in the previous section,
the lack of periodicity of $H$ was compensated
by the fact that the complex time flow 
of $H$ preserved the space flat K\"ahler structures.

Here we will need to construct $Q(H)$
with tbe help of 
 the unitary representation of the finite Heisenberg
group (see \cite{Mumford83}, p 5-11) on the space of level $N_\phi$ theta functions,
$\mathcal{H}_{P}$, with three generators,
$$
\left\{S = Q(e^{2 \pi i y}), \, T = Q(e^{2 \pi i x}), \, e^{-2 \pi \frac{i}{N_\phi}}  \right\}   \, . 
$$
The basis vectors $\psi^{(l)}_{\mathrm{LLL}}(z,\tau)$ in (\ref{e-psilll}) are eigenfunctions of $S$,
$$
S \, (\psi^{(l)}_{\mathrm{LLL}}(z,\tau)) = e^{-2 \pi i \, \frac{l}{N_\phi}} \, \psi^{(l)}_{\mathrm{LLL}}(z,\tau) \, .
$$
The minus sign in the exponent of the eigenvalues is due to different conventions from those in \cite{Mumford83}, p. 9.
We use the unitary operator $S$ and the classical relation between 
$H$ and $e^{2 \pi i \, y}$
to define  $Q(H)$,
\begin{equation}
    Q(H) = Q\left(\sin^2(2\pi y)\right) := \left(\frac{S - S^{-1}}{2i}\right)^2 . 
\label{eq:quantumoperatornonflat}
\end{equation}
The basis functions of (\ref{e-psilll}) are then also eigenfunctions of $Q(H)$,
\begin{equation}
\label{e-eigen-QH}
Q(H) \, (\psi^{(l)}_{\mathrm{LLL}}(z,\tau)) = 
\sin^2(2\pi \frac{l}{N_\phi})
 \, \psi^{(l)}_{\mathrm{LLL}}(z,\tau) \, .
    \end{equation}
and we can define the gCST:
\begin{equation}
   U_{s} = e^{sQ_{\mathrm{pre}}(H)}\circ e^{-sQ(H)} = e^{s\sin^2(2\pi y)}e^{-s 2\pi y\sin(4\pi y)}e^{is\frac{\sin(4\pi y)}{N_{\phi}}\frac{\partial}{\partial x}}e^{-s Q(H)}.  
   \label{eq:GCSTnonflatsingleparticle}
\end{equation}

The operators in Eq.~\eqref{eq:GCSTnonflatsingleparticle} commute, so we first apply $e^{is\frac{\sin(4\pi y)}{N_{\phi}}\frac{\partial}{\partial x}}$ to the single-particle wavefunction in Eq.~\eqref{LLLunitary}. 
\begin{align*}
e^{is\frac{\sin(4\pi y)}{N_{\phi}}\frac{\partial}{\partial x}}(\theta_{l/N_{\phi}}(N_{\phi}z,N_{\phi}\tau)e^{i\pi N_{\phi}\tau y^2}) &=  \theta_{l/N_{\phi}}\left(N_{\phi}\left(x+\frac{\sin(4\pi y)s}{N_{\phi}} \, i +\tau y\right),N_{\phi}\tau\right)e^{i\pi N_{\phi}\tau y^2}\\
&= \theta_{l/N_{\phi}}(N_{\phi}z_{is},N_{\phi}\tau)e^{i\pi N_{\phi}\tau y^2},
\end{align*}
where, $z_{is}=z+\frac{\sin(4\pi y)s}{N_{\phi}} \, i,  $ is the evolved coordinate.

Next, we apply the quantum operator $e^{-sQ(H)}$. 
From (\ref{e-eigen-QH}) we see that
\begin{align*}
e^{-sQ(H)}\theta_{l/N_{\phi}}(N_{\phi}z_{s},N_{\phi}\tau)e^{i\pi N_{\phi}\tau y^2}&= e^{-s \sin^2\left(2\pi\frac{l}{N_{\phi}}\right)}\theta_{l/N_{\phi}}(N_{\phi}z_{t},N_{\phi}\tau)e^{i\pi N_{\phi}\tau y^2}. 
\end{align*}

Combining all contributions, the complete evolution of the LLL wavefunction under gCST is:

\begin{align}
& U_{s}\Psi_{\mathrm{LLL}}(z,\tau) = U_{s}\left(\theta_{l/N_{\phi}}(N_{\phi}z,N_{\phi}\tau)e^{i\pi N_{\phi}\tau y^2}\right)\nonumber\\
\label{evolvedsingleparticlenonflat}
&= e^{s\sin^2{(2\pi y)}}e^{-s 2\pi y\sin{(4\pi y)}}e^{-s\sin^2{\left(2\pi\frac{l}{N_{\phi}}\right)}}\theta_{l/N_{\phi}}\left(N_{\phi}\left(z+\frac{is\sin{(4\pi y)}}{N_{\phi}}\right),N_{\phi}\tau\right)e^{i\pi N_{\phi}\tau y^2}. 
\end{align}

We now examine the evolution of the density profile of single-particle states as the torus geometry is deformed to singular geometries, $s \to s_c$. To achieve this, we compute the particle density using Eq.\eqref{densitysecondquantization}. Unlike the flat case, this computation presents additional challenges. First, the normalization of the single-particle wavefunctions in Eq.\eqref{evolvedsingleparticlenonflat} can no longer be obtained analytically, requiring numerical evaluation, which we did using \textit{Mathematica}. Second, the deformation parameter $s$ must be varied carefully to account for its impact on the geometry, particularly the Gauss curvature of the torus. The particle density has physical meaning only when the Gauss curvature $K$ remains finite. Divergences in $K$ correspond to singularities in the geometry, rendering the torus unphysical. To ensure nonsingular and physically meaningful geometries, we derive the Gauss curvature $K$ as a function of the parameter $s$.
The Gauss curvature $K$ is computed using a key result from differential geometry. For a closed, oriented 2-dimensional Riemannian manifold $M$, the \textit{Gauss-Bonnet theorem} relates the integrated curvature to the topology of $M$:
\begin{equation}
    \int_{M} KdA = 2\pi e(TM)\cdot[M],
\end{equation}
where $K$ is the Gauss curvature, $dA$ is the area form, and $e(TM)$ is the Euler class of the tangent bundle $TM$. For the torus $\mathbb{T}^2$, we  equip $T\mathbb{T}^2$ with a Levi-Civita connection compatible with the metric. The holomorphic tangent bundle $(T\mathbb{T}^2)^{(1,0)}$ is an Hermitian line bundle with an induced Hermitian metric, $h$. The curvature of the corresponding Chern connection relates to the Gauss curvature $K$ via:
\begin{equation}
    i\Omega = KdA,
    \label{eq:curvature_gauss_curvature}
\end{equation}
where $i\Omega$ is the curvature form and $dA$ is identified as the symplectic form $\omega$ on the torus. The curvature form can be expressed in terms of the Hermitian metric, $h$, on the bundle $(T\mathbb{T}^2)^{(1,0)}$:
\begin{equation}
    \Omega = \overline{\partial}\partial\log(h).
    \label{eq:curvature_hermitian_metric}
\end{equation}
To compute the Gauss curvature explicitly, we first determine the Hermitian metric of the torus after deformation. The evolved complex coordinate $z_s=x+\tau y+\frac{is \sin{(4\pi y)}}{N_\phi}=x+\tau y + \frac{is H'(y)}{2\pi N_{\phi}}$ from Eq.~\eqref{evolvedsingleparticlenonflat} allows us to express the symplectic form $\omega$ in the following way:
\begin{equation}
\omega = 2\pi N_{\phi}dx\wedge dy  = \frac{2\pi N_{\phi}}{\tau_{2}+\frac{s}{2\pi N_{\phi}}H''(y)}\frac{i}{2}dz_{s}\wedge d\overline{z}_{s} = \frac{i}{2}h dz_{s}\wedge d\overline{z}_{s},
\end{equation}
where $h=\frac{2\pi N_{\phi}}{\tau_{2}+\frac{s}{2\pi N_{\phi}}H''(y)}$. Substituting $h$ into Eq.~\eqref{eq:curvature_hermitian_metric}, the curvature form becomes:
\begin{equation}
i\Omega = \left(\frac{-2}{(4\pi N_{\phi})^2}\frac{\partial^2 h}{\partial y^2}\right)\omega. 
\end{equation}
From Eq.~\eqref{eq:curvature_gauss_curvature}, we extract the Gauss curvature:
\begin{equation}
K = \frac{i\Omega}{\omega} = \frac{2}{(4\pi N_{\phi})^2}\frac{\partial^2 h}{\partial y^2}. 
\end{equation}
This result reveals the dependence of $K$ on $s$. For fixed values of $N_{\phi}=2$ and $\tau = i$, the Gauss curvature as a function of $y$ is illustrated in Fig.~\ref{fig:gauss_torus}:

\begin{figure}[htbp]
  \centering

  \begin{minipage}[b]{0.48\linewidth}
    \centering
    \includegraphics[width=0.8\linewidth]{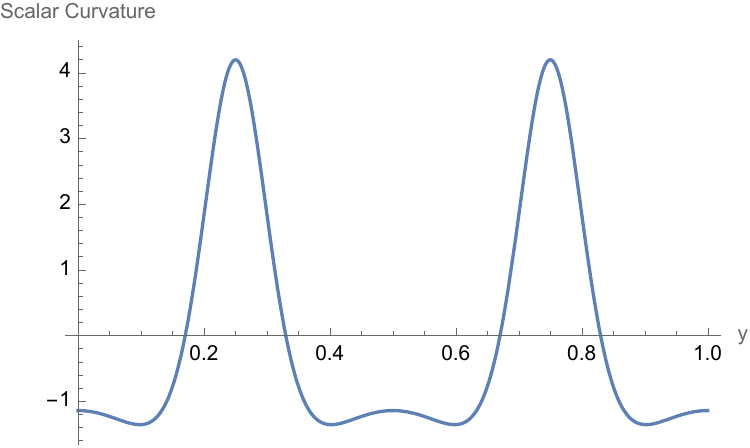}\\[-0.5ex]
    {\small (a) \(s=0.05\)}
  \end{minipage}\hfill
  \begin{minipage}[b]{0.48\linewidth}
    \centering
    \includegraphics[width=0.8\linewidth]{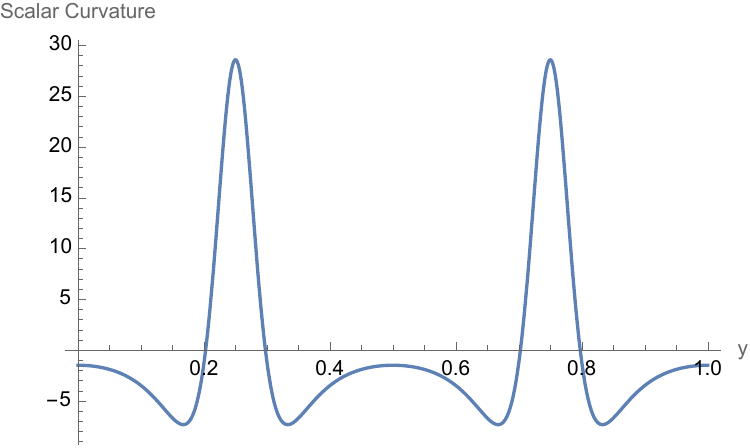}\\[-0.5ex]
    {\small (b) \(s=0.1\)}
  \end{minipage}\\[0.5cm]

  \begin{minipage}[b]{0.48\linewidth}
    \centering
    \includegraphics[width=0.8\linewidth]{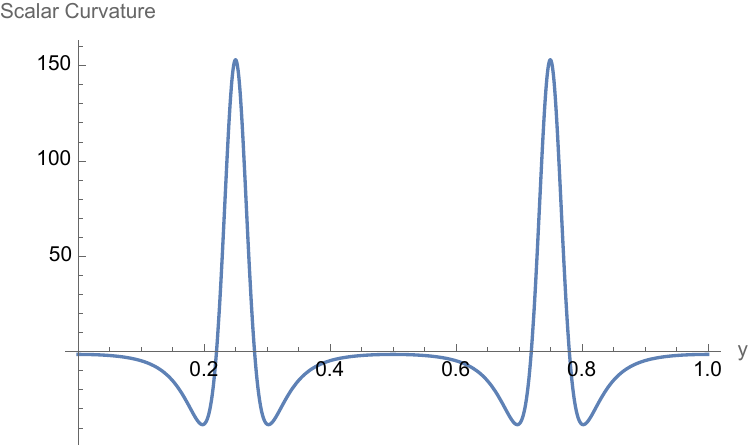}\\[-0.5ex]
    {\small (c) \(s=0.13\)}
  \end{minipage}\hfill
  \begin{minipage}[b]{0.48\linewidth}
    \centering
    \includegraphics[width=0.8\linewidth]{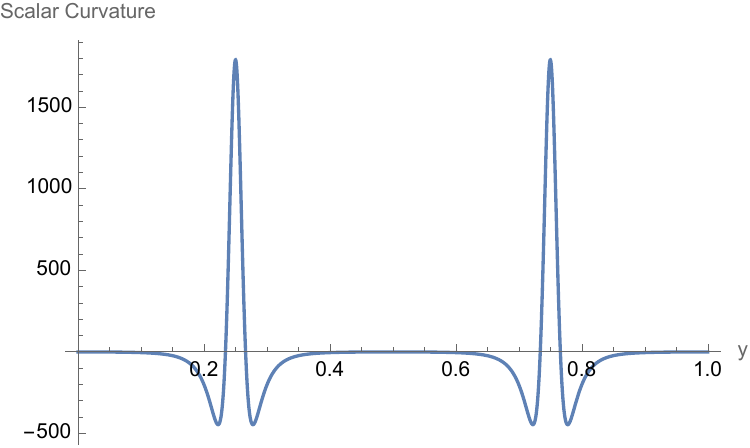}\\[-0.5ex]
    {\small (d) \(s=0.15\)}
  \end{minipage}

  \caption{Gauss curvature of the Torus in terms of the value of \(s\).}
  \label{fig:gauss_torus}
\end{figure}

We can now examine the density profiles of the single-particle states. The Kähler metric $g$ is written as $g= h|dz|^2$, so it can be expressed as:
\begin{equation}
g=\frac{2\pi N_{\phi}}{\tau_{2}+\frac{s}{2\pi N_{\phi}}H''(y)}|dz|^2,
\end{equation}
it is evident that for $y=0$, $N_{\phi}=2$, and $\tau_{2}=1$, the metric diverges at a critical value of $s$, $s_{c}=\frac{1}{2\pi}\sim 0.16$. We refer to $s_{c}$ as the critical value of s. For $s \geq s_{c}$, the polarization is no longer Kähler, and the system becomes nonphysical. Consequently, we restrict our analysis to $s<s_c$, setting an upper limit of $s=0.15$.   

We begin by analyzing the density of the single-particle states while keeping $N_{\phi}$ and $\tau$ fixed and varying $s$:

\begin{figure}[htbp]
  \centering

  \begin{minipage}[b]{0.48\linewidth}
    \centering
    \includegraphics[width=0.8\linewidth]{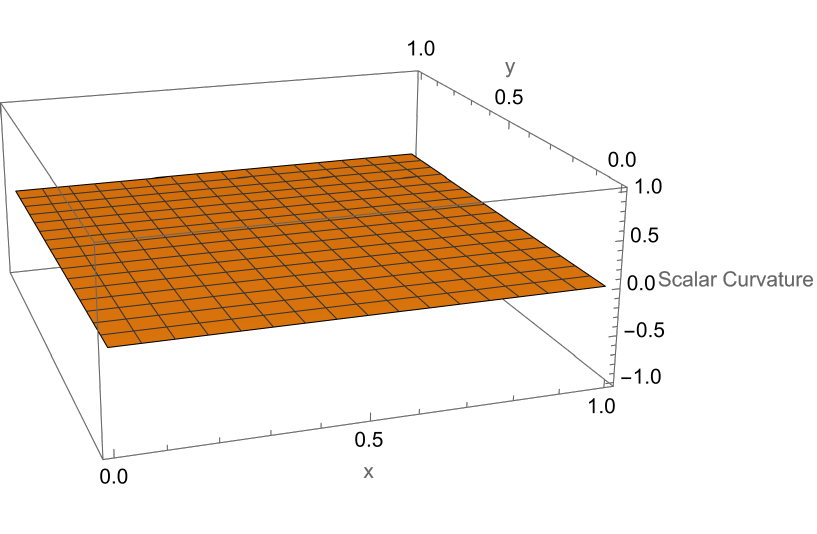}\\[-0.5ex]
    {\small (a) Scalar Curvature \(s=0\)}
  \end{minipage}\hfill
  \begin{minipage}[b]{0.48\linewidth}
    \centering
    \includegraphics[width=0.8\linewidth]{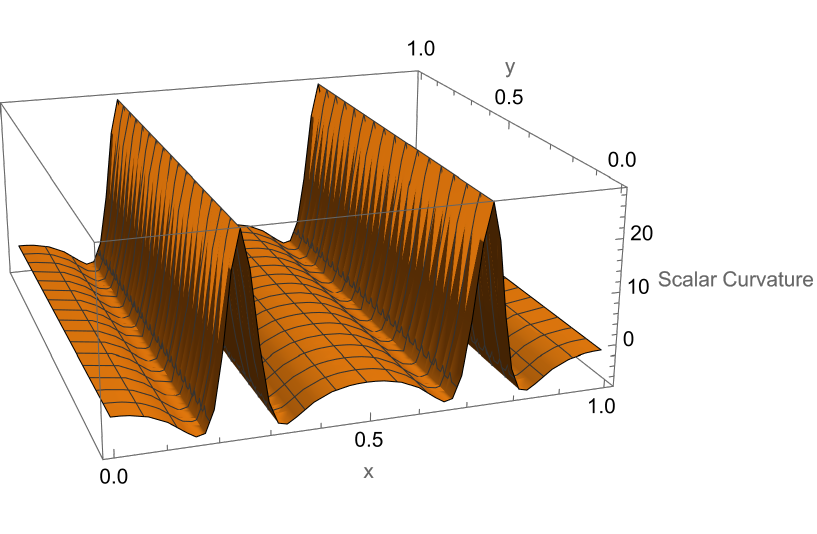}\\[-0.5ex]
    {\small (b) Scalar Curvature \(s=0.1\)}
  \end{minipage}\\[0.5cm]

  \begin{minipage}[b]{0.48\linewidth}
    \centering
    \includegraphics[width=0.8\linewidth]{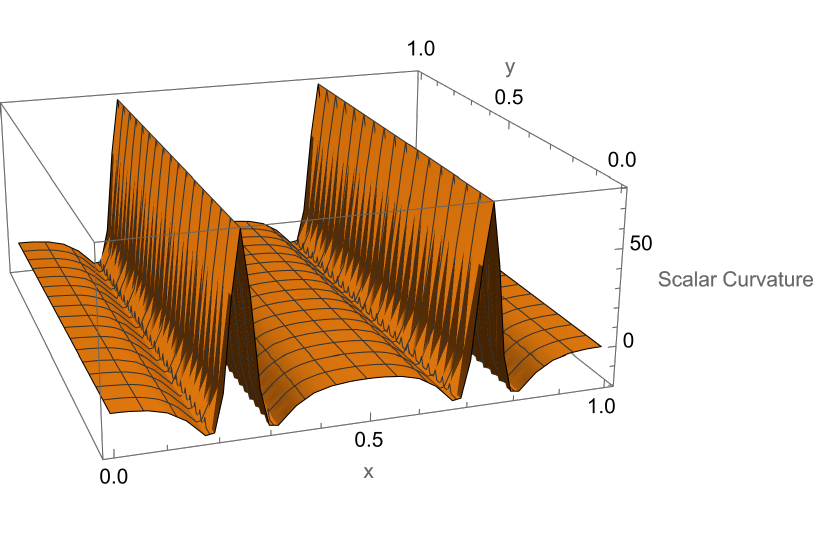}\\[-0.5ex]
    {\small (a) Scalar Curvature \(s=0.12\)}
  \end{minipage}\hfill
  \begin{minipage}[b]{0.48\linewidth}
    \centering
    \includegraphics[width=0.8\linewidth]{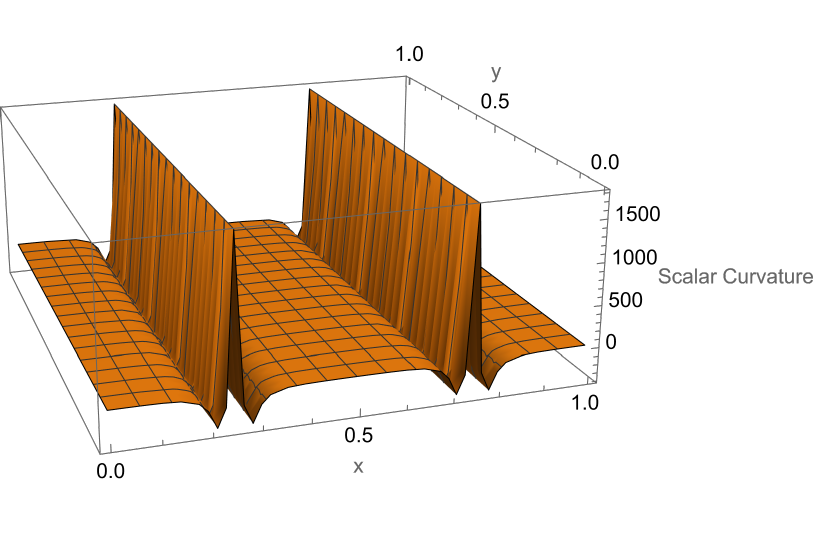}\\[-0.5ex]
    {\small (b) Scalar Curvature \(s=0.15\)}
  \end{minipage}\\[0.5cm]

  \caption{Gauss and single‐particle scalar curvature plots for various values of \(s\).}
  \label{fig:scalar_curvatures}
\end{figure}

From the plots above, we observe that regions of the torus with higher scalar curvature correspond to areas with greater deformation in the density profiles. This correlation highlights the geometric influence on the particle distribution. Next, we examine the impact of $N_{\phi}$ on the density profile by varying $s$ and $\tau$ fixed.

\begin{figure}[htbp]
  \centering

  \begin{minipage}[b]{0.48\linewidth}
    \centering
    \includegraphics[width=0.9\linewidth]{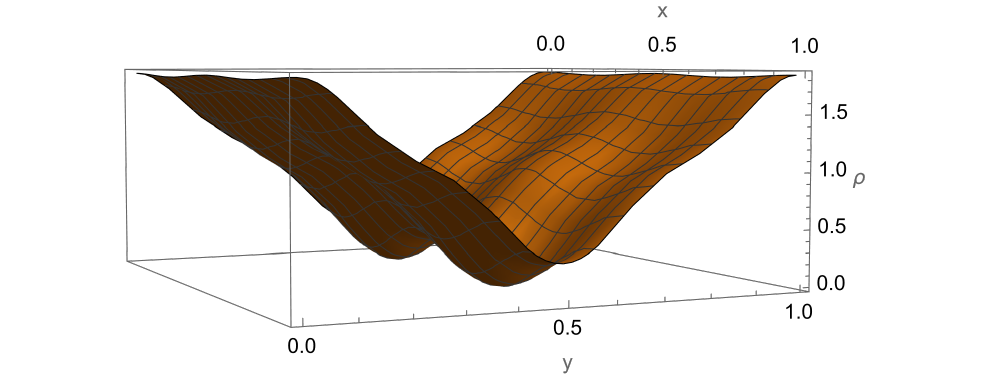}\\[-0.5ex]
    {\small (a) \(N_{\phi}=2\)}
  \end{minipage}\hfill
  \begin{minipage}[b]{0.48\linewidth}
    \centering
    \includegraphics[width=0.9\linewidth]{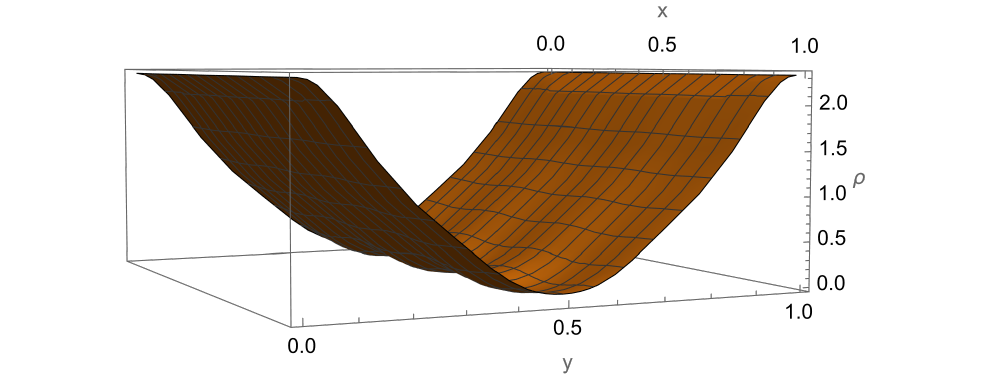}\\[-0.5ex]
    {\small (b) \(N_{\phi}=3\)}
  \end{minipage}\\[0.5cm]

  \begin{minipage}[b]{0.48\linewidth}
    \centering
    \includegraphics[width=0.9\linewidth]{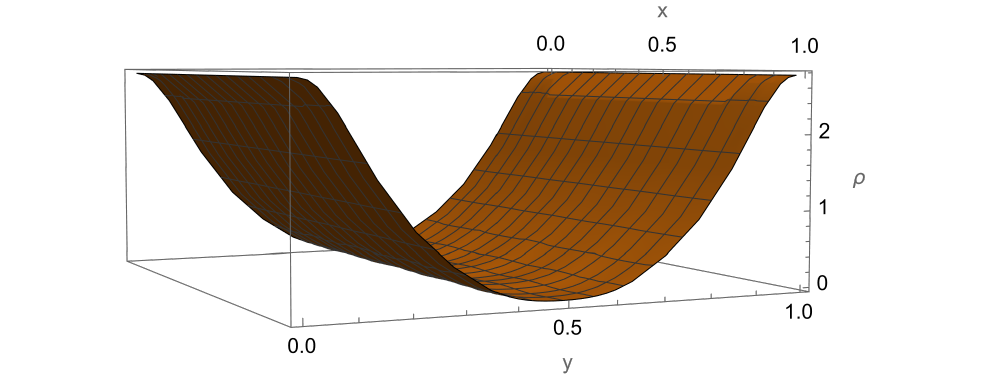}\\[-0.5ex]
    {\small (c) \(N_{\phi}=4\)}
  \end{minipage}\hfill
  \begin{minipage}[b]{0.48\linewidth}
    \centering
    \includegraphics[width=0.9\linewidth]{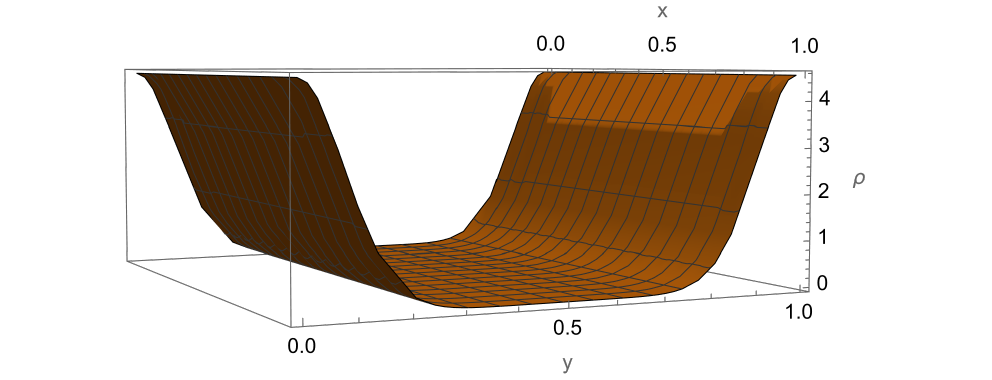}\\[-0.5ex]
    {\small (d) \(N_{\phi}=10\)}
  \end{minipage}

  \caption{Varying \(N_{\phi}\) while keeping \(s=0.1\) and \(\tau=i\).}
  \label{fig:changeNphiSingleparticle}
\end{figure}

\subsection{Integer Quantum Hall Effect and evolution under the gCST}
\label{section:integerquantumhalleffectnonflat}

In this section, we analyze the evolution of the IQHE wavefunction, given in Eq.~\eqref{eq:integerwavefunction}, under the gCST in the non-flat geometry derived in the previous section:
\begin{equation*}
\Psi_{\mathrm{Integer}}(z_{1},\dots,z_{N}) = \vartheta\begin{bmatrix}
\frac{N-1}{2}\\
\frac{N-1}{2}
\end{bmatrix}(Z,\tau)\prod_{1\leq i,j\leq N}\theta_{11}(z_{i}-z_{j},\tau)e^{i\pi \tau N\sum_{j=1}^{N}(y_{j})^2}.
\end{equation*}
Following the approach in the flat geometry case, we separately examine the gCST's effect on the center-of-mass component and the one-particle wavefunctions $\theta_{11}(z_{i}-z_{j},\tau)$. We begin by determining the transformation of each $\theta_{11}(z_{i}-z_{j},\tau)$ under the gCST. Since the previous section established the single-particle state evolution in the non-flat geometry, we apply the same methodology to obtain:
\begin{align*}
U_{s}\theta_{11}(z_{i}-z_{j},\tau) &= e^{s\sin^2{(2\pi(y{i}-y_{j}))}}e^{-2\pi s(y_{i}-y_{j})\sin{4\pi(y_{i}-y_{j})}}\theta_{11}((z_{i}-z_{j})_{is},\tau_s),
\end{align*}
with $(z_{i}-z_{j})_{is}=z_{i}-z_{j}+\frac{\sin{(4\pi(y_{i}-y_{j})is)}}{N}$.
Next, we compute the evolution of the center-of-mass component under the gCST. Recall that the gCST takes the form:
\begin{equation*}
U_{s} = \left(e^{s\sum_{j=1}^{N}\sin^2{2\pi y_{j}}}e^{-2\pi s \sum_{j=1}^{N}y_{j}\sin{4\pi y_{j}}}e^{is\sum_{j=1}^{N}\frac{\sin{4\pi y_{j}}}{N}\frac{\partial}{\partial x_{j}}}e^{-sQ(H)}\right)\bigg|_{t=is}.
\end{equation*}
The transformation proceeds as follows: first, we act on the coordinates with the Hamiltonian vector field $X_H = \sum_{j=1}^{N} \frac{\sin(4\pi y_j)}{N} \frac{\partial}{\partial x_j}$ by applying its flow; then, we act with the function. In analogy with the single-particle case, the flow generated by this vector field is given by:
\begin{equation*}
\phi^{t}_{X_{H}}(\sum_{j=1}^{N}x_{j},\sum_{j=1}^{N}y_{j}) = \left(\sum_{j=1}^{N}x_{j} + \frac{\sin\left(4\pi \left(\sum_{j=1}^{N}y_{j}\right)t\right)}{N_{\phi}}, \sum_{j=1}^{N}y_{j}\right).
\end{equation*}
Applying this to the center-of-mass theta function:
\begin{align*}
e^{t\sum_{j=1}^{N}\frac{\sin{4\pi y_{j}}}{N}\frac{\partial}{\partial x_{j}}} \vartheta\begin{bmatrix}
\frac{N-1}{2}\\
\frac{N-1}{2}
\end{bmatrix}(Z,\tau) &= \vartheta\begin{bmatrix}
\frac{N-1}{2}\\
\frac{N-1}{2}
\end{bmatrix}\left(Z + \frac{\sin{(4\pi Y)}t}{N_{\phi}}, \tau\right),
\end{align*}
where $X=\sum_{j=1}^{N}x_{j}$, $Y=\sum_{j=1}^{N}y_{j}$ and $Z_{is}=Z+\frac{\sin{(4\pi Y)}is}{N_{\phi}}$. Next, the quantum operator's action follows from the spectral theorem:
\begin{equation*}
e^{-sQ(H)}\vartheta\begin{bmatrix}
\frac{N-1}{2}\\
\frac{N-1}{2}
\end{bmatrix}(Z_{is},\tau) = e^{-s\sin^2{\left(2\pi\frac{N-1}{2}\right)}}\vartheta\begin{bmatrix}
\frac{N-1}{2}\\
\frac{N-1}{2}
\end{bmatrix}(Z_{is},\tau).
\end{equation*}
Combining both results, the evolution of the center-of-mass component under the gCST is:
\begin{equation*}
U_{s}\vartheta\begin{bmatrix}
\frac{N-1}{2}\\
\frac{N-1}{2}
\end{bmatrix}(Z,\tau) = e^{s\sin^2{(2\pi Y)}}e^{-2\pi s Y\sin{(4\pi Y)}}e^{-s\sin^2{\left(2\pi\frac{N-1}{2}\right)}}\vartheta\begin{bmatrix}
\frac{N-1}{2}\\
\frac{N-1}{2}
\end{bmatrix}(Z_{is},\tau).
\end{equation*}
Finally, assembling both transformed components, the full evolution of $\Psi_{\mathrm{integer}}$ under the gCST is:
\begin{align*}
U_{s}\Psi_{\mathrm{Integer}} &= e^{s\sin^2{(2\pi Y)}}e^{-2\pi s Y\sin{(4\pi Y)}}e^{-s\sin^2{\left(2\pi\frac{N-1}{2}\right)}}\vartheta\begin{bmatrix}
\frac{N-1}{2}\\
\frac{N-1}{2}
\end{bmatrix}(Z_{is},\tau) \\
&\quad \times \prod_{1\leq i,j\leq N}e^{s\sin^2{(2\pi(y_{i}-y_{j}))}}e^{-2\pi s(y_{i}-y_{j})\sin{4\pi(y_{i}-y_{j})}}\theta_{11}((z_{i}-z_{j})_{is},\tau_s)e^{i\pi \tau N\sum_{j=1}^{N}(y_{j})^2}.
\end{align*}

Now that we have computed the evolution of the wavefunction for the IQHE, we aim to determine the particle density. This was carried out numerically using the expression in Eq.~\eqref{densitysecondquantization}. We investigate the density for the same scenarios as in the single-particle case: fixing $N=2$ and $\tau=i$ while varying $s$, and fixing $s=0.1$ and $\tau=i$ while varying $N$. As before, we restrict $s$ to values that do not cause the Gauss curvature of the torus to diverge. The results for $N=2$ and $\tau=i$ with varying $s$ are presented in Fig.~\ref{fig:change_s_N2_tau_i}.
\begin{figure}[htbp]
  \centering

  \begin{minipage}[b]{0.48\linewidth}
    \centering
    \includegraphics[width=0.7\linewidth]{Figures/ScalarCurvatures=0.pdf}\\[-0.5ex]
    {\small (a) Scalar Curvature \(s=0\)}
  \end{minipage}\hfill
  \begin{minipage}[b]{0.48\linewidth}
    \centering
    \includegraphics[width=0.7\linewidth]{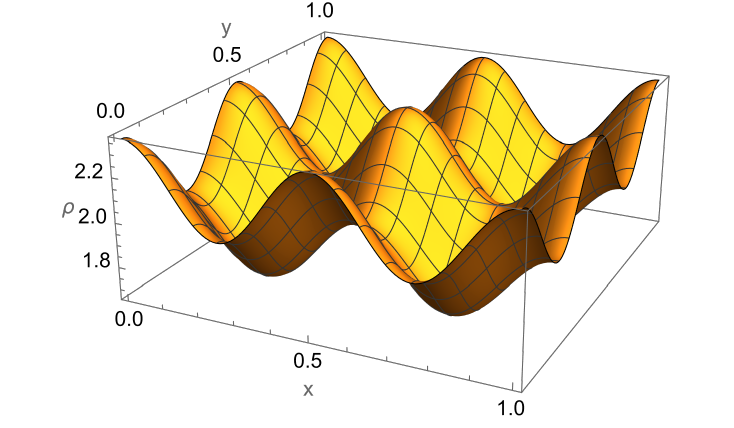}\\[-0.5ex]
    {\small (b) \(s=0\)}
  \end{minipage}\\[0.5cm]

  \begin{minipage}[b]{0.48\linewidth}
    \centering
    \includegraphics[width=0.7\linewidth]{Figures/ScalarCurvatures=0.1.pdf}\\[-0.5ex]
    {\small (c) Scalar Curvature \(s=0.1\)}
  \end{minipage}\hfill
  \begin{minipage}[b]{0.48\linewidth}
    \centering
    \includegraphics[width=0.7\linewidth]{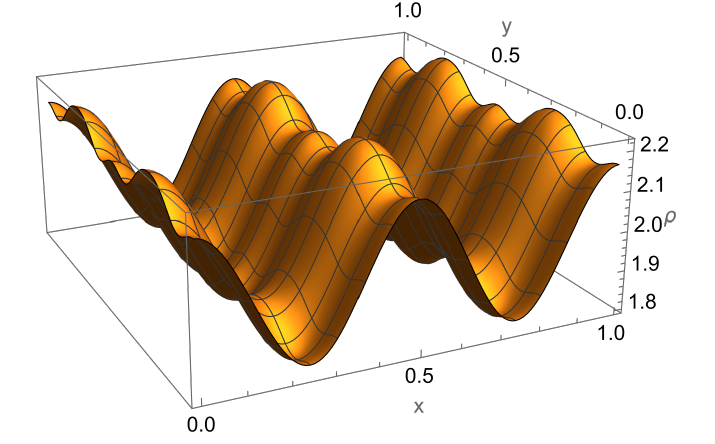}\\[-0.5ex]
    {\small (d) \(s=0.1\)}
  \end{minipage}\\[0.5cm]

  \begin{minipage}[b]{0.48\linewidth}
    \centering
    \includegraphics[width=0.7\linewidth]{Figures/ScalarCurvatures=0.12.pdf}\\[-0.5ex]
    {\small (e) Scalar Curvature \(s=0.12\)}
  \end{minipage}\hfill
  \begin{minipage}[b]{0.48\linewidth}
    \centering
    \includegraphics[width=0.7\linewidth]{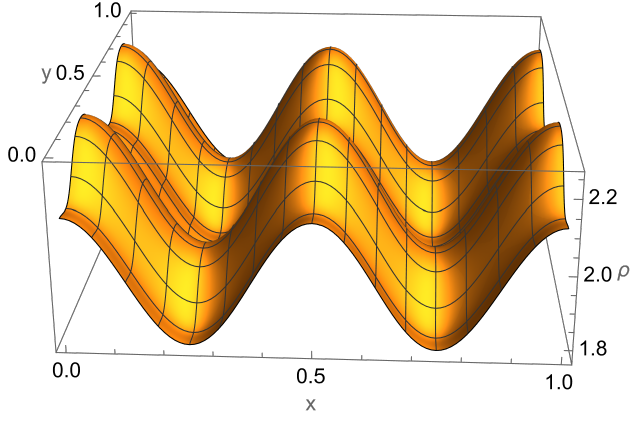}\\[-0.5ex]
    {\small (f) \(s=0.12\)}
  \end{minipage}\\[0.5cm]

  \begin{minipage}[b]{0.48\linewidth}
    \centering
    \includegraphics[width=0.7\linewidth]{Figures/ScalarCurvatures=0.15.pdf}\\[-0.5ex]
    {\small (g) Scalar Curvature \(s=0.15\)}
  \end{minipage}\hfill
  \begin{minipage}[b]{0.48\linewidth}
    \centering
    \includegraphics[width=0.7\linewidth]{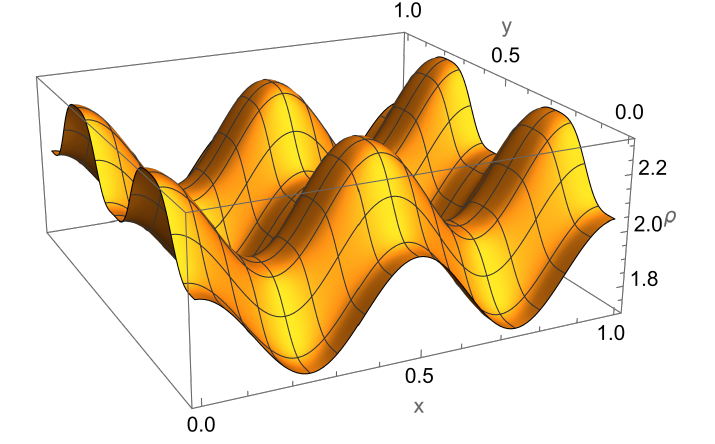}\\[-0.5ex]
    {\small (h) \(s=0.15\)}
  \end{minipage}

  \caption{Varying \(s\) while keeping \(N=2\) and \(\tau=i\).}
  \label{fig:change_s_N2_tau_i}
\end{figure}

As observed previously, the deformation of the density plot is more pronounced at points where the scalar curvature of the torus is higher. Next, we analyze the density plot for $s=0.1$ and $\tau=i$ fixed while varying $N$. The results can be found in Fig.~\ref{fig:changeN_IQHEnonflat}.

\begin{figure}[htbp]
  \centering

  \begin{minipage}[b]{0.32\linewidth}
    \centering
    \includegraphics[width=0.9\linewidth]{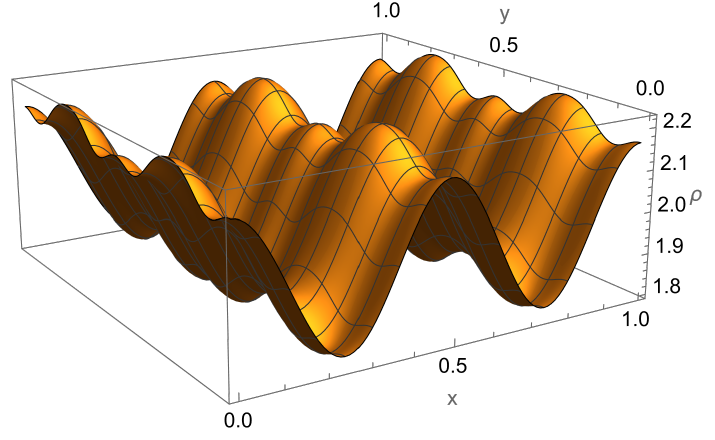}\\[-0.5ex]
    {\small (a) $N=2$}
  \end{minipage}\hfill
  \begin{minipage}[b]{0.32\linewidth}
    \centering
    \includegraphics[width=0.9\linewidth]{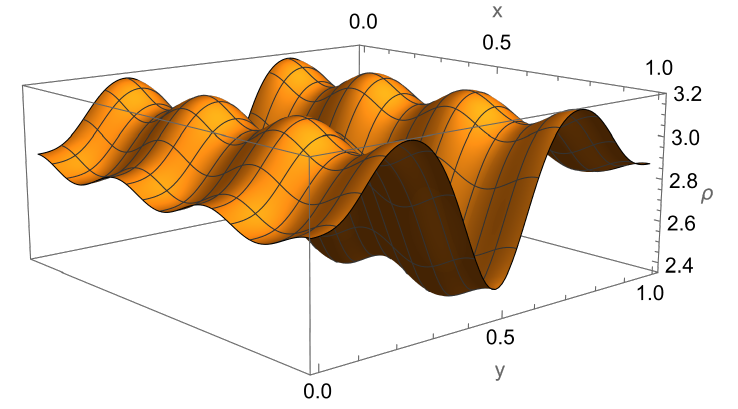}\\[-0.5ex]
    {\small (b) $N=3$}
  \end{minipage}\hfill
  \begin{minipage}[b]{0.32\linewidth}
    \centering
    \includegraphics[width=0.9\linewidth]{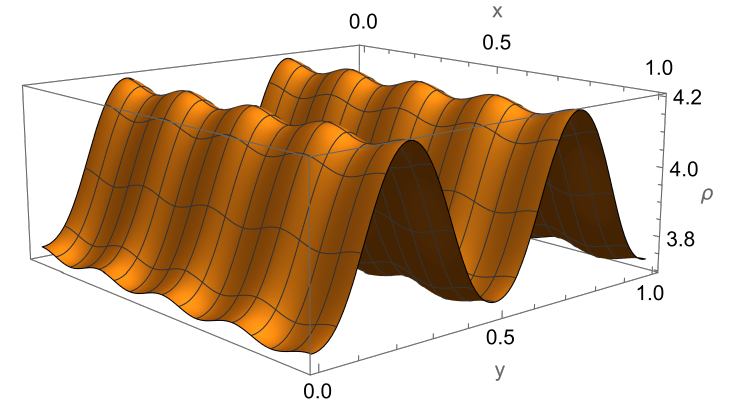}\\[-0.5ex]
    {\small (c) $N=4$}
  \end{minipage}\\[0.5cm]

  \begin{minipage}[b]{0.32\linewidth}
    \centering
    \includegraphics[width=0.9\linewidth]{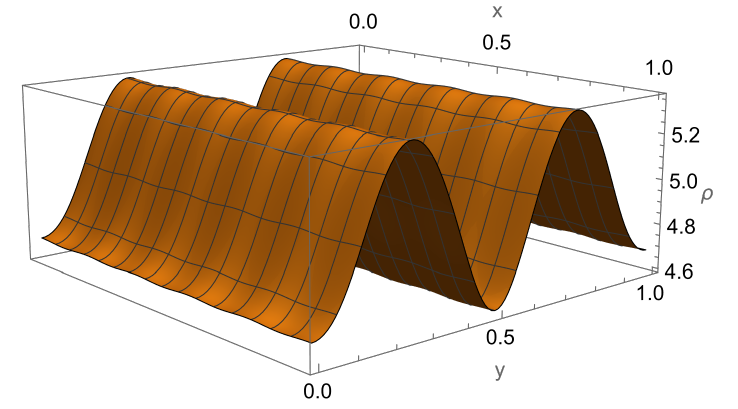}\\[-0.5ex]
    {\small (d) $N=5$}
  \end{minipage}\hfill
  \begin{minipage}[b]{0.32\linewidth}
    \centering
    \includegraphics[width=0.9\linewidth]{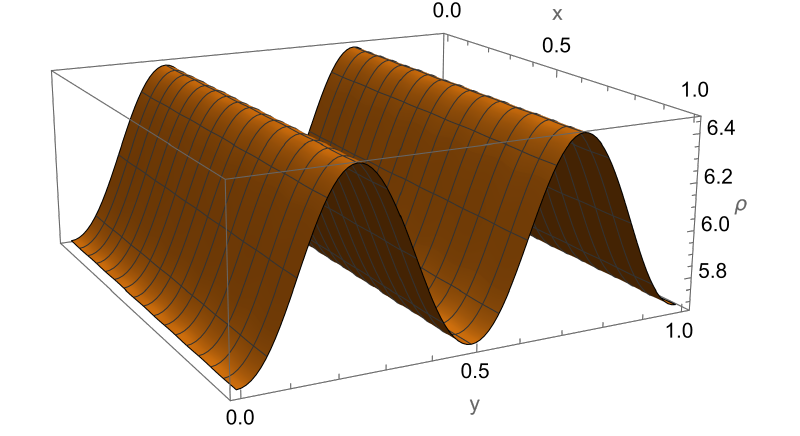}\\[-0.5ex]
    {\small (e) $N=6$}
  \end{minipage}\hfill
  \begin{minipage}[b]{0.32\linewidth}
    \centering
    \includegraphics[width=0.9\linewidth]{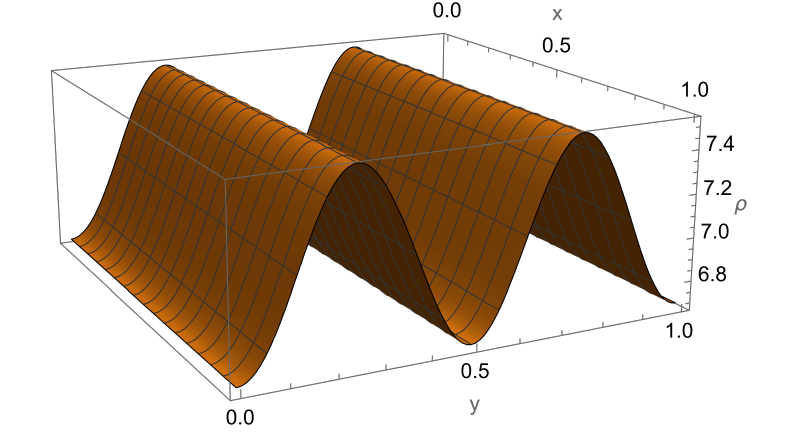}\\[-0.5ex]
    {\small (f) $N=7$}
  \end{minipage}

  \caption{Varying $N$ while keeping $s=0.1$ and $\tau=i$.}
  \label{fig:changeN_IQHEnonflat}
\end{figure}

\subsection{Fractional Quantum Hall Effect and evolution under the gCST}
\label{section:fractionalquantumhalleffectnonflat}

After exploring the IQHE in the non-flat case, we now turn to the FQHE. The wavefunction for the FQHE is given by Eq.~\eqref{eq:Laughlinfractional}:

\begin{equation*}
\Psi_{\mathrm{Laughlin}}(z_{1},\dots,z_{N_{e}})=\vartheta\begin{bmatrix}
\frac{N-1}{2k}+\frac{l}{k}\\
\frac{N-1}{2}
\end{bmatrix}(kZ,k\tau)\prod_{1\leq i,j\leq N_{e}}(\theta_{11}(z_{i}-z_{j},\tau))^{k}e^{i\pi \tau N_{\phi}\sum_{j=1}^{N_{e}}(y_{j})^2},
\end{equation*}

where the filling fraction is $\nu = N_{e}/N_{\phi} = 1/k$. We now apply the non-flat gCST to this wavefunction. As in the flat case, the transformation acts separately on each factor of the Laughlin wavefunction. Since we have already determined how $\theta_{11}(z_{i}-z_{j},\tau)$ evolves in the non-flat IQHE case, it follows that its $k^{th}$ power evolves as:

\begin{equation}
U_{s}(\theta_{11}(z_{i}-z_{j},\tau))^{k} = e^{s k\sin^2{(2\pi(y_{i}-y_{j}))}}e^{-2sk\pi(y_{i}-y_{j})\sin{4\pi(y_{i}-y_{j})}}(\theta_{11}((z_{i}-z_{j})_{is},\tau_s))^{k}. 
\end{equation}

The evolution of the center-of-mass component follows analogously to the IQHE case. We then obtain the following evolution of the FQHE wavefunction under the non-flat gCST:

\begin{align*}
&U_{s}\Psi_{\mathrm{Laughlin}} = e^{s\sin^2{2\pi Y}}e^{-s2\pi Y\sin{4\pi Y}}e^{-s\sin^2{\left(\frac{N_{e}-1}{2k}+\frac{l}{k}\right)}}\vartheta\begin{bmatrix}
\frac{N-1}{2k}+\frac{l}{k}\\
\frac{N-1}{2}
\end{bmatrix}(kZ_{s},k\tau)\cdot\\
&\times\left(\prod_{1\leq i,j\leq N_{e}}e^{sk\sin^2{(2\pi(y_{i}-y_{j}))}}e^{-s2k\pi(y_{i}-y_{j})\sin{4\pi(y_{i}-y_{j})}}(\theta_{11}((z_{i}-z_{j})_{s},\tau_s))^{k}\right)e^{i\pi \tau N_{\phi}\sum_{j=1}^{N_{e}}(y_{j})^2}, 
\end{align*}
with $Z_{is} = Z+\frac{\sin{(4\pi Y)}is}{N_{\phi}}$ and $Y = \sum_{j=1}^{N_e} y_{j}$.

Having determined the evolution of the FQHE wavefunction, we now compute the particle density. As in the IQHE case, this is done numerically using Eq.~\eqref{densitysecondquantization}. We consider the filling fraction $\nu=1/3$ (i.e. $k=3$) while keeping $N_{e}$ and $\tau$ fixed, varying $s$ within the limits that prevent the Gauss curvature of the torus from diverging.
Computational constraints make it challenging to analyze cases with varying particle numbers, so we leave this for future work. The results are presented in Fig.~\ref{fig:changes_FQHEnonflat}.

\begin{figure}[htbp]
  \centering

  \begin{minipage}[b]{0.48\linewidth}
    \centering
    \includegraphics[width=0.7\linewidth]{Figures/ScalarCurvatures=0.pdf}\\[-0.5ex]
    {\small (a) Scalar Curvature \(s=0\)}
  \end{minipage}\hfill
  \begin{minipage}[b]{0.48\linewidth}
    \centering
    \includegraphics[width=0.7\linewidth]{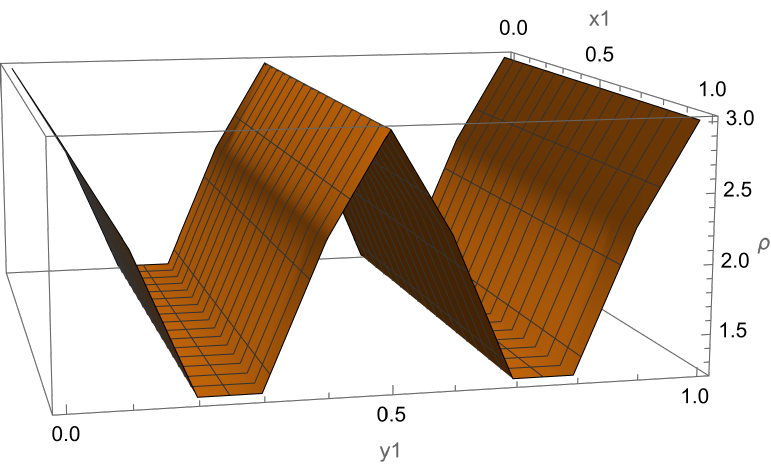}\\[-0.5ex]
    {\small (b) \(s=0\)}
  \end{minipage}\\[0.5cm]

  \begin{minipage}[b]{0.48\linewidth}
    \centering
    \includegraphics[width=0.7\linewidth]{Figures/ScalarCurvatures=0.1.pdf}\\[-0.5ex]
    {\small (c) Scalar Curvature \(s=0.1\)}
  \end{minipage}\hfill
  \begin{minipage}[b]{0.48\linewidth}
    \centering
    \includegraphics[width=0.7\linewidth]{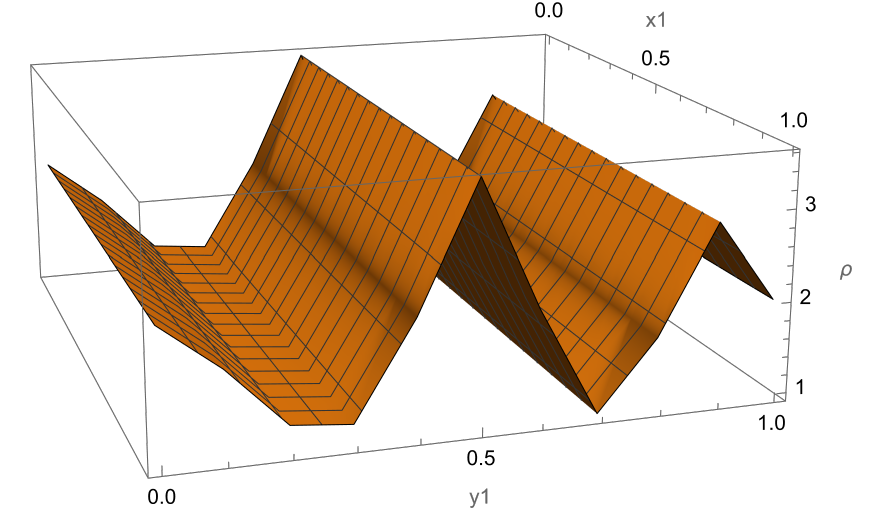}\\[-0.5ex]
    {\small (d) \(s=0.1\)}
  \end{minipage}\\[0.5cm]

  \begin{minipage}[b]{0.48\linewidth}
    \centering
    \includegraphics[width=0.7\linewidth]{Figures/ScalarCurvatures=0.12.pdf}\\[-0.5ex]
    {\small (e) Scalar Curvature \(s=0.12\)}
  \end{minipage}\hfill
  \begin{minipage}[b]{0.48\linewidth}
    \centering
    \includegraphics[width=0.7\linewidth]{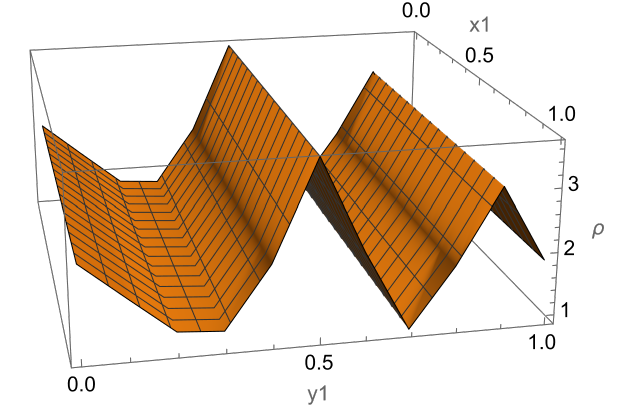}\\[-0.5ex]
    {\small (f) \(s=0.12\)}
  \end{minipage}\\[0.5cm]

  \begin{minipage}[b]{0.48\linewidth}
    \centering
    \includegraphics[width=0.7\linewidth]{Figures/ScalarCurvatures=0.15.pdf}\\[-0.5ex]
    {\small (g) Scalar Curvature \(s=0.15\)}
  \end{minipage}\hfill
  \begin{minipage}[b]{0.48\linewidth}
    \centering
    \includegraphics[width=0.7\linewidth]{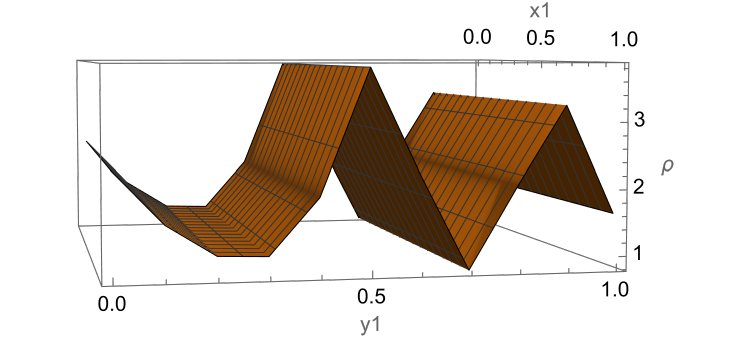}\\[-0.5ex]
    {\small (h) \(s=0.15\)}
  \end{minipage}

  \caption{Varying \(s\) while keeping \(N_{e}=2\) and \(\tau=i\).}
  \label{fig:changes_FQHEnonflat}
\end{figure}

The trends observed in the density plots for one-particle states and the IQHE persist in the FQHE case, reinforcing the validity of our approach.

In both the IQHE and the FQHE cases, the behaviours obtained are consistent with the Wen-Zee effective action for quantum Hall fluids, where the particle density acquires a term proportional to the scalar curvature of the sample, the constant of proportionality being the filling fraction times the shift~\cite{wen:zee:1992, read:2009,read:rezayi:2011}.


\newpage

\begin{appendices}
    
\section{Flat geometries on the plane}
\label{section:flatgeometriesplane}

In section \ref{sec:flat_geometry} we have considered families of flat geometries on the torus which have been generated by the flow in imaginary time of a quadratic in canonical variables (thus non-periodic) Hamiltonian. Note that the resulting complex tori are therefore not biholomorphic to each other, that is they are inequivalent as complex manifolds. We then obtained the Laughlin states for the Hall effect on such flat tori, via the generalized coherent state transform, by deforming the Laughlin states for one such geometry. The Laughlin states obtained via this deformation, coincide with the Laughlin states which have been previously obtained by other methods. Therefore, for the case of flat geometries on the torus, the generalized coherent state transform reproduces, through a process of deformation of a given initial geometry, the correct Laughlin states for the deformed geometries.
This adds plausibility to the results of \cite{matos:mera:mourao:mourao:nunes:2023} and of section \ref{sec:non_flat_geometry}, where, by a similar process of deformation of geometry starting from the round metric on the sphere and the flat metric on the torus, Laughlin states for deformed metrics  with $S^1$-symmetry have been proposed. 

In the present appendix, for completeness, we will show that, for flat metrics on the plane, the generalized coherent state transform also reproduces the known Laughlin states; that is, by starting with the Laughlin states for the standard Euclidean metric and by following an Hamiltonian flow in imaginary time, we obtain the correct Laughlin states for other flat metrics.

Let us consider the symplectic plane $(\R^2,\omega)$, with position and momentum Cartesian coordinates $(x,p)$ and standard symplectic form
$$
\omega = dx \wedge dp.
$$
The Hamiltonian vector field for the Hamiltonian function $H= \frac12 p^2$ is
$$
X_H = p \frac{\partial}{\partial x},
$$
so that the corresponding Hamiltonian flow is 
$$
\varphi^{X_H}_t(x,p) = (x+tp, p),\, t\in \R.
$$

The vertical  polarization leading to the Schr\"odinger quantization of the plane, is generated by the Hamiltonian vector field of $x$,
$$
X_x = -\frac{\partial}{\partial p}.
$$

Translation-invariant K\"ahler structures on the plane can then be obtained by taking the Hamiltonian flow of $X_H$ in imaginary time $\tau = is, s>0$. Indeed, one can define a global holomorphic coordinate
$$
z_s = e^{isX_H} x = x+isp,\, s> 0,
$$
corresponding to a complex structure $J_s, s>0$, so that $(\R^2,\omega, J_s)$ becomes a K\"ahler manifold with Riemannian metric
\begin{equation}\label{flatmetriconplane}
\gamma_s = s^{-1} dx^2 + sdp^2, \, s>0.
\end{equation}

Note that these K\"ahler structures are not rotationally invariant if $s\neq 1.$

These K\"ahler structure has been used in \cite{Johri.Papic.Schmitteckert.Bhatt.Haldane2016} to describe the Hall effect in plane geometries where the fundamental droplets with a given magnetic flux, describing the neighborhood of an a electron, are no longer circular but elliptic.
Note that, in \cite{Johri.Papic.Schmitteckert.Bhatt.Haldane2016} (see their Section 2), this corresponds to the use of the holomorphic coordinate 
$$
z_{(\alpha)}=  \alpha x + \frac{i}{\alpha} p,\ \alpha\in \R,\ \alpha\neq 0. 
$$
This is, of course, equivalent to the global holomorphic coordinate
$$
\alpha^{-1} z_{(\alpha)} = x + \frac{i}{\alpha^2} p,
$$
so that, for $s= {\alpha}^{-2}$, the plane geometry described above, via deformation of the standard Euclidian geometry by Hamiltonian flow in imaginary time, coincides with the one considered in \cite{Johri.Papic.Schmitteckert.Bhatt.Haldane2016}.
For filling fraction $\nu = \frac13$ and $N$ particles, the Laughlin state for the deformed geometry, considered in \cite{Johri.Papic.Schmitteckert.Bhatt.Haldane2016}, with $s= \alpha^{-2}$,
up to an irrelevant overall constant, is
\begin{equation}\label{laughlinhaldaneprobing}
\psi_{\mathrm{Laughlin}} ((z_s)_1, \dots, (z_s)_{N}) = 
\prod_{i<j}^N \left((z_s)_{i}-(z_s)_{j}\right)^3 e^{-\sum_{k} \vert (z_s)_{k}\vert^2/ 4sl_B^2},
\end{equation}
where $l_B$ is the magnetic length. This comes from the one-particle states
\begin{equation}\label{eqflatoneparticle}
\psi_m(z_s) = z_s^m e^{-\vert z_s\vert^2/ 4sl_B^2},\,\,\,  m=0,1, 2, \dots     
\end{equation}
(From \cite{Murayama}, for instance, one easily checks that indeed these give the ground states for the motion of a charged particle under a magnetic field in the geometry given by the flat, but non-isotropic, metric (\ref{flatmetriconplane}).)

We will now show that these states can be described by means of a CST by deformation of the usual Euclidian geometry at $s=1$.
We have the prequantum connection 
$$
\nabla = d + i pdx,
$$
and the prequantum operator
$$
Q_{\mathrm{pre}} (H) = i\nabla_{X_H} + H = i p\frac{\partial}{\partial x} -\frac{p^2}{2}. 
$$
The quantum operator is then
$$
Q(H) = \frac12 (Q_{\mathrm{pre}}(p))^2 = -\frac12 (X_p)^2 = -\frac12 \frac{\partial^2}{\partial x^2},
$$
and the (unitary) CST (\ref{timescst}) at time $t = is$, which corresponds to the classical Segal-Bargmann transform, is given by 
\begin{equation}\label{flatplanecst}
U_{s} = \left(e^{-i \tau Q_{\mathrm{pre}}(H)}\circ e^{i \tau Q(H)}\right)\bigg|_{\tau=is} = 
e^{-\frac{s}{2}p^2} e^{isp\frac{\partial}{\partial x}} e^{\frac{s}{2}\frac{\partial ^2}{\partial x^2}}.
\end{equation}

Note that for $s_1,s_2 >0,$ 
\begin{equation}\label{transitive}
    U_{s_1+s_2} = U_{s_1}\circ U_{s_2}.
\end{equation}

Consider the Hermite polynomials, for $a >0, m\in \mathbb{N}_0$ \cite{Hall00, Ab},
$$
H_m^a(x) = (-1)^m e^{\frac{x^2}{a}} \frac{d^m}{dx^m} e^{-\frac{x^2}{a}} = 
\left( \sqrt{\frac{2}{a}}x-\sqrt{\frac{a}{2}}\frac{d}{dx}\right)^m\cdot 1,
$$
and the Hermite functions 
$$
h_m^a(x) = (-1)^m e^{-\frac{x^2}{a}} H_m^a (x) = \frac{d^m}{dx^m}e^{-\frac{x^2}{a}} =
(-1)^m e^{-\frac{x^2}{a}} \left( \sqrt{\frac{2}{a}}x-\sqrt{\frac{a}{2}}\frac{d}{dx}\right)^m\cdot 1.
$$
Note that $\left\{ h_m^a\right\}_{m\in \mathbb{N}_0}$ is an orthogonal basis of $L^2(\mathbb{R},dx)$.
Using the expression for the heat kernel we obtain that
$$
U_s \left( h_m^a\right) = e^{-\frac{s}{2}p^2} e^{isp\frac{\partial}{\partial x}}\int_{\mathbb{R}} e^{-\frac{(y-x)^2}{2s}} (-1)^m e^{-\frac{y^2}{a}} 
\left( \sqrt{\frac{2}{a}}y-\sqrt{\frac{a}{2}}\frac{d}{dy}\right)^m\cdot 1 \; dy.
$$

\begin{proposition}\label{hermiteprop}For $s>0$, $m\in \mathbb{N}_0$, setting $a=2s$,
$$
    U_s \left(h_m^{2s}\right) = c(m,s)  e^{-\frac{i}{2s}xp}z_s^m e^{-\frac{\vert z_s\vert^2}{4s}} = c(m,s) e^{-\frac{i}{2s}xp} \psi_m (z_s), 
$$
where $c(m,s)$ is a constant, the phase $e^{-\frac{i}{2s}xp}$ is a gauge transformation and we take $l_B=1$ in (\ref{eqflatoneparticle}) for simplicity.
\end{proposition}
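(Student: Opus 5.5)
The plan is to compute $U_s(h_m^{2s})$ one factor at a time, in the order the operators act in (\ref{flatplanecst}), using a generating function in $m$ so that all $m$ are handled at once. With $a=2s$ we have $h_m^{2s}(x)=\frac{d^m}{dx^m}e^{-x^2/(2s)}$. Taylor's theorem gives
$$
\sum_{m\ge 0}\frac{t^m}{m!}\,h_m^{2s}(x)=e^{-(x+t)^2/(2s)} .
$$
So it is enough to find $U_s\big(e^{-(x+t)^2/(2s)}\big)$ and then read off the coefficient of $t^m$. Every factor of $U_s$ is linear, and the series in $t$ converges nicely on Gaussians, so exchanging $U_s$ with the sum is harmless.

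\emph{Step 1 (heat factor).} The operator $e^{\frac{s}{2}\partial_x^2}$ is convolution with the heat kernel at time $s$. Up to the factor $\sqrt{2\pi s}$, the Gaussian $e^{-(x+t)^2/(2s)}$ is itself the heat kernel at time $s$ centred at $-t$. The semigroup property (the same mechanism as (\ref{transitive})) therefore gives
$$
e^{\frac{s}{2}\partial_x^2}\,e^{-(x+t)^2/(2s)}=\tfrac{1}{\sqrt2}\,e^{-(x+t)^2/(4s)} .
$$
Equivalently, the heat operator commutes with $d/dx$, so
$$
e^{\frac{s}{2}\partial_x^2}h_m^{2s}=\tfrac{1}{\sqrt2}\,\frac{d^m}{dx^m}e^{-x^2/(4s)} .
$$

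\emph{Step 2 (prequantum factor).} The result of Step 1 is the restriction of an entire function of $x$. The operator $e^{isp\partial_x}$ acts on it as the analytically continued translation $x\mapsto x+isp=z_s$, as in Section \ref{sub:Hamiltonian_Flows_Imaginary_Time}. This produces $e^{-(z_s+t)^2/(4s)}$. We then multiply by $e^{-\frac{s}{2}p^2}$ and complete the square:
$$
-\frac{z_s^2}{4s}-\frac{s}{2}p^2=-\frac{x^2+s^2p^2}{4s}-\frac{i}{2}xp=-\frac{|z_s|^2}{4s}-\frac{i}{2}xp .
$$
This separates out the Gaussian factor $e^{-|z_s|^2/(4s)}$ of $\psi_m(z_s)$ in (\ref{eqflatoneparticle}) (with $l_B=1$) and a pure phase in $xp$. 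The phase is a gauge transformation because it comes from changing the primitive of $\omega$ from $p\,dx$ to the symmetric gauge adapted to $z_s$. I will check this normalization against the convention of the statement, since my first computation gives the exponent $-\frac{i}{2}xp$ rather than $-\frac{i}{2s}xp$.

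\emph{Step 3 (coefficients, the main obstacle).} After Steps 1 and 2, the generating function becomes
$$
\tfrac{1}{\sqrt2}\,e^{-\frac{i}{2}xp}\,e^{-|z_s|^2/(4s)}\,e^{-(2z_st+t^2)/(4s)} .
$$
The coefficient of $t^m$ is the Gaussian and phase times a polynomial in $z_s$ of degree $m$, with leading term proportional to $z_s^m$. The $t^2$ term in the exponent produces lower-order Hermite corrections. To obtain exactly $c(m,s)\,z_s^m$, I would try two things. First, check whether the intended input is the Hermite function built on the square-root Gaussian, which is the normalization for which the Segal--Bargmann transform sends Hermite functions to monomials, as in \cite{Hall00}. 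Second, and as an alternative, use the operator form $h_m^{a}=(-1)^m e^{-x^2/a}\big(\sqrt{2/a}\,x-\sqrt{a/2}\,\tfrac{d}{dx}\big)^m\cdot 1$. Conjugating the raising operator through $U_s$ should give multiplication by a constant times $z_s$. This intertwining would show directly that the image is $c(m,s)\,z_s^m$ times the image of the ground state, which was computed in Steps 1 and 2. Identifying the correct intertwining relation, with the correct value of $a$, is the step I expect to need the most care. Once it is in place, $c(m,s)$ is read off from the constants $\tfrac{1}{\sqrt2}$ and the powers of $s$ produced by the conjugation.
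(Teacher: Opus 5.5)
Your Steps 1 and 2 are correct, and so is the phase you found. Since $z_s^2=x^2-s^2p^2+2isxp$, one has $-\frac{z_s^2}{4s}-\frac{s}{2}p^2=-\frac{|z_s|^2}{4s}-\frac{i}{2}xp$ for every $s$. So the factor $e^{-\frac{i}{2s}xp}$ in the statement is right only at $s=1$. This is consistent with gauge theory: the change of primitive from $p\,dx$ to the symmetric gauge $\frac12(p\,dx-x\,dp)$ is $\frac12 d(xp)$, which does not depend on $s$.

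The genuine gap is Step 3, and it cannot be closed. Your computation there is already a complete evaluation, and it refutes the statement. With the paper's definition $h_m^{2s}=\frac{d^m}{dx^m}e^{-x^2/(2s)}$, the heat operator commutes with $\partial_x$, so
\[
U_s\big(h_m^{2s}\big)=\tfrac{1}{\sqrt2}\,e^{-\frac{i}{2}xp}\,e^{-\frac{|z_s|^2}{4s}}\Big(e^{\frac{z^2}{4s}}\,\partial_z^m\, e^{-\frac{z^2}{4s}}\Big)\Big|_{z=z_s}.
\]
The bracket is a Hermite polynomial of degree $m$; for $m=2$ it is $\frac{z_s^2}{4s^2}-\frac{1}{2s}$. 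Hence for $m\ge 2$ no constant $c(m,s)$, and no phase, makes the identity true. Only $m=0,1$ survive, and only with the corrected phase.

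The paper's own sketch cannot do better. Moving the transpose $\sqrt{2/a}\,y+\sqrt{a/2}\,\frac{d}{dy}$ of the raising operator onto $e^{-(y-x)^2/(2s)}e^{-y^2/a}$ yields exactly $-\sqrt{a/2}\,\partial_x$ applied to that product, which is the same computation as your Step 1. Relatedly, $\{h_m^a\}$ is not orthogonal in $L^2(\R,dx)$: one finds $\langle h_0^a,h_2^a\rangle=-\sqrt{\pi/(2a)}$.

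Your second remedy fails as formulated. Since $e^{-x^2/a}\big(\sqrt{2/a}\,x-\sqrt{a/2}\,\frac{d}{dx}\big)P=-\sqrt{a/2}\,\frac{d}{dx}\big(e^{-x^2/a}P\big)$, the paper's raising operator is just a multiple of $\partial_x$. If you write $U_sf=e^{-\frac{i}{2}xp}e^{-|z_s|^2/(4s)}G(z_s)$, then $\partial_x$ is intertwined with $\partial_z-\frac{z}{2s}$ acting on $G$, not with multiplication by $z$, and no choice of $a$ repairs this.

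Your first remedy is the right one, and it finishes the argument immediately. Since $e^{\frac{s}{2}\partial_x^2}\,x\,e^{-\frac{s}{2}\partial_x^2}=x+s\partial_x$, the operator intertwined with multiplication by $x$ (hence, after Step 2, by $z_s$) is $x-s\partial_x$. Define
\[
\phi_m:=(x-s\partial_x)^m e^{-\frac{x^2}{2s}}=s^m H_m^{s}(x)\,e^{-\frac{x^2}{2s}}.
\]
This is the Hermite polynomial for the weight $e^{-x^2/s}$ times the square root of that weight, i.e. the genuinely orthogonal Hermite functions. Step 1 gives $e^{\frac{s}{2}\partial_x^2}\phi_m=\frac{1}{\sqrt2}\,x^m e^{-x^2/(4s)}$. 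Step 2 then gives $U_s\phi_m=\frac{1}{\sqrt2}\,e^{-\frac{i}{2}xp}\,z_s^m\,e^{-|z_s|^2/(4s)}=\frac{1}{\sqrt2}\,e^{-\frac{i}{2}xp}\,\psi_m(z_s)$.

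You should state and prove this corrected proposition, with input $\phi_m$ in place of $h_m^{2s}$ and phase $e^{-\frac{i}{2}xp}$, rather than leave Step 3 as an open obstacle.
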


\begin{proof}
This follows by straightforward calculation, using integration by parts, noticing that
$$
\left(\sqrt{\frac{2}{a}}y + \sqrt{\frac{a}{2}} \frac{d}{dy}  \right) 
e^{-\frac{2s+a}{2sa}\left(y-\frac{a}{(2s+a)^2}x\right)^2} = 
$$
$$
=
\left(\frac{\sqrt{2a}}{(2s+a)^2} x + \left(\sqrt{\frac{2}{a}}s(2s+a) - (2s+a)^2\right) \frac{d}{dx}   \right) e^{-\frac{2s+a}{2sa}\left(y-\frac{a}{(2s+a)^2}x\right)^2}
 $$
 and by evaluation of the Gaussian integral.
\end{proof}

It is clear that starting, say, at $s=1$, one can generate  the one-particle states for all the deformed non-isotropic flat geometries by applying $U_s$ for an appropriate value of $s$ and using the property of transitivity (\ref{transitive}). Concretely, note that in Proposition \ref{hermiteprop} one has the explicit expression for the map between the Hilbert space of the vertical polarization, $\mathcal{H}_0$ and the Hilbert space of the holomorphic polarization at time $s>0$, $\mathcal{H}_s$, which is most usefully written in the above basis. Of course, one also has maps between the Hilbert spaces of holomorphic one-particle states
$$
U_{s_2}: \mathcal{H}_{s_1}\to \mathcal{H}_{s_1+s_2},
$$
for $s_1, s_2>0$, which are also given directly by the operator (\ref{flatplanecst}) in view of (\ref{transitive}).

We therefore obtain, also for these families of non-isotropic flat geometries on the plane, that Laughlin states can be obtained by applying a CST to the Laughlin state for one such fixed geometry. 
{}

\end{appendices}


%
\section*{Acknowledgments}
B.~M. acknowledges support from the Security and Quantum Information Group (SQIG) in Instituto de Telecomunica\c{c}\~{o}es, Lisbon. This work is funded by FCT/MECI through national funds and when applicable co-funded EU funds under UID/50008: Instituto de Telecomunicações (IT). B.~M. further acknowledges the Scientific Employment Stimulus --- Individual Call (CEEC Individual) --- 2022.05522.CEECIND/CP1716/CT0001, with DOI 10.54499/2022.05522.CEECIND/CP1716/CT0001.  C.P. acknowledges support from the US-Israel Binational Science Foundation (BSF,
No.2018226), Jerusalem, Israel, and the ISRAEL SCIENCE
FOUNDATION (ISF, Grants No. 2307/24, No. 1077/23, and
No. 1916/23). C.P. wants to thank the Alexander Zaks scholarship for its support.
JM and JPN were funded by  FCT/Portugal through the project CAMGSD UID/04459/2025.

\bibliographystyle{unsrt}
\bibliography{biblio}   

\begin{thebibliography}{10}

\bibitem{Woodhouse}
Nicholas Woodhouse.
\newblock {\em Geometric quantization}.
\newblock Oxford University Press, 1992.

\bibitem{donaldson:1999}
Simon Donaldson.
\newblock Symmetric spaces, {K}\"ahler geometry and {H}amiltonian dynamics.
\newblock In {\em Northern {C}alifornia {S}ymplectic {G}eometry {S}eminar}, volume 196 of {\em Amer. Math. Soc. Transl. Ser. 2}, pages 13--33. Amer. Math. Soc., Providence, RI, 1999.

\bibitem{MouraoPimentelComplextime}
J.~M. Mourão and J.~P. Nunes.
\newblock On complexified analytic {H}amiltonian flows and geodesics on the space of {K}ahler metrics, 2013.

\bibitem{MouraoPimentelGCST}
W.~D. Kirwin, J.~M. Mourão, and J.~P. Nunes.
\newblock Complex time evolution in geometric quantization and generalized coherent state transforms, 2012.

\bibitem{matos:mera:mourao:mourao:nunes:2023}
Gabriel Matos, Bruno Mera, Jos{\'e}~M. Mour{\~a}o, Paulo~D. Mour{\~a}o, and Jo{\~a}o~P. Nunes.
\newblock {Laughlin States Change Under Large Geometry Deformations and Imaginary Time Hamiltonian Dynamics}.
\newblock {\em Communications in Mathematical Physics}, 399(3):2045--2070, 2023.

\bibitem{Haldane2}
F.~D.~M. Haldane.
\newblock Fractional quantization of the {H}all effect: A hierarchy of incompressible quantum fluid states.
\newblock {\em Phys. Rev. Lett.}, 51(7):605--608, 1983.

\bibitem{Haldane-Rezayi}
E.H.~Rezayi F.M.D.~Haldane.
\newblock Periodic {L}aughlin-{J}astrow wave functions for the fractional quantized {H}all effect.
\newblock {\em Physical Review B}, 31:2529(R), 1985.

\bibitem{Murayama}
H.~Murayama.
\newblock Lecture notes for 221{A}, {UC} {B}erkeley.

\bibitem{Johri.Papic.Schmitteckert.Bhatt.Haldane2016}
S.~J. Johri, Z.~Papi\'c, P.~Schmitteckert, R.~N. Bhatt, and F.~D.~M. Haldane.
\newblock Probing the geometry of the {L}aughlin state.
\newblock {\em New Journal of Physics}, 18:025011, 2016.

\bibitem{BrianHall}
Brian~C. Hall.
\newblock {\em Quantum Theory for Mathematicians}.
\newblock Springer New York, 2013.

\bibitem{kempf}
George Kempf.
\newblock {\em Complex algebraic varieties and theta functions}.
\newblock Springer-Verlag, 1991.

\bibitem{BaierMouraoNunes}
T.Baier, J.~M. Mourão, and J.~P. Nunes.
\newblock Quantization of abelian varieties: Distributional sections and the transition from kähler to real polarizations.
\newblock {\em Journal of Functional Analysis}, 258:3388--3412, 2010.

\bibitem{LargeLimitKlevtsov}
Semyon Klevtsov.
\newblock Geometry and large {N} limits in {L}aughlin states, 2016.

\bibitem{wen:niu:1990}
X.~G. Wen and Q.~Niu.
\newblock {Ground-state degeneracy of the fractional quantum Hall states in the presence of a random potential and on high-genus Riemann surfaces}.
\newblock {\em Phys. Rev. B}, 41:9377--9396, May 1990.

\bibitem{PhysRevB.77.155308}
E.~J. Bergholtz and A.~Karlhede.
\newblock Quantum {H}all system in {T}ao-{T}houless limit.
\newblock {\em Phys. Rev. B}, 77:155308, Apr 2008.

\bibitem{hansson2009taothoulessrevisited}
T.~H. Hansson and A.~Karlhede.
\newblock {T}ao-{T}houless revisited, 2009.

\bibitem{Zhou_Nussinov_Seidel}
A.~Seidel Z.~Zhou, Z.~Nussinov.
\newblock Heat equation approach to geometric changes of the torus {L}aughlin state.
\newblock {\em Physical Review B}, 87:115103, 2013.

\bibitem{Mumford83}
David Mumford.
\newblock {\em Tata Lectures on Theta I}.
\newblock Progress in Mathematics, Vol.28. 1983.

\bibitem{wen:zee:1992}
Xiao-Gang Wen and A.~Zee.
\newblock {Shift and spin vector: New topological quantum numbers for the Hall fluids}.
\newblock {\em Physical Review Letters}, 69(6):953--956, 1992.

\bibitem{read:2009}
N.~Read.
\newblock {Non-Abelian adiabatic statistics and Hall viscosity in quantum Hall states and $p_x + ip_y$ paired superfluids}.
\newblock {\em Physical Review B}, 79(4):045308, 2009.

\bibitem{read:rezayi:2011}
{Hall viscosity, orbital spin, and geometry: Paired superfluids and quantum Hall systems}, author = {Read, N. and Rezayi, E. H.}
\newblock {\em Physical Review B}, 84(8):085316, 2011.

\bibitem{Hall00}
Brian Hall.
\newblock {\em Holomorphic methods in analysis and mathematical physics}.
\newblock in First Summer School in Analysis and Mathematical Physics, S. P\'erez-Esteva and C. Villegas-Blas, Eds. 2000.

\bibitem{Ab}
G.T.F Abreu.
\newblock Closed-form correlation functions of generalized hermite wavelets.
\newblock {\em IEEE Trans. Signal Process.}, 54:2258--2262, 2005.

\end{thebibliography}


\end{document}